\documentclass[a4paper]{article}

\usepackage[left=1.5cm]{geometry}
\usepackage{amssymb}
\usepackage{eurosym}
\usepackage{amsfonts}
\usepackage{amsmath}
\usepackage{amsthm}
\usepackage[pdftex]{graphicx}

\usepackage[innercaption]{sidecap}
\usepackage{latexsym}
\usepackage{lscape}

\setcounter{MaxMatrixCols}{6}

\newtheorem{theorem}{Theorem}

\theoremstyle{remark}
\newtheorem{remark}{Remark}

\theoremstyle{lemma}
\newtheorem{lemma}{Lemma}

\setlength{\textheight}{8.3in}

\begin{document}

\author{M.-F. Danca\\
Romanian Institute of Science and Technology,\\400487 Cluj-Napoca, Romania\\
danca@rist.ro\\
M. Romera, G. Pastor, F. Montoya\\
Instituto de F\'{\i}sica Aplicada, Consejo Superior de
Investigaciones Cient\'{\i}ficas,\\ Serrano 144, 28006 Madrid,
Spain}

\title{Finding Attractors of Continuous-Time Systems by Parameter Switching}

\maketitle

\begin{abstract}
The review presents a parameter switching algorithm and his
applications which allows numerical approximation of any attractor
of a class of continuous-time dynamical systems depending linearly
on a real parameter. The considered classes of systems are modeled
by a general initial value problem embedding dynamical systems which
are continuous and discontinuous with respect to the state variable,
and of integer and fractional order. The numerous results, presented
in several papers, are systematized here on four representative
known examples representing the four classes. The analytical proof
of the algorithm convergence for the systems belonging to the
continuous class is briefly presented, while for the other
categories of systems the convergence is numerically verified via
computational tools. The utilized numerical tools necessary to apply
the algorithm are contained in five appendices.
\end{abstract}

\noindent \textbf{Keywords} Parameter switching; Global attractors;
Local attractors; Fractional systems; Discontinuous systems;
Filippov regularization

%\sloppy
\section{Introduction}

In Nature there are many different interactions and the real systems
could evolve according to more that one dynamics for short periods
of time. Therefore, it is reasonable to think that the evolution of
some natural processes could be imagined as the result of the
alternation of different dynamics for relatively short periods of
time. In particular, a topic of research regarding parameter
switching which has arisen in the last years, consists in studying
the dynamics of continuous-time systems
\cite{Danca1,Danca2,Danca5,Danca3,DancaW,DancaMorel,Danca4} and
discrete systems \cite{Almeida,Romera,Danca7}.

In this paper we review aspects of some previous results, obtained
by us with parameter switching techniques. We considered general
classes of systems, continuous and discontinuous with respect to the
state variable, and of integer and fractional order.

\noindent The truthfulness of the previous results are sustained
here by another numerical tool, the cross-correlation.

Via numerical simulations we found for a large class of systems (and
analytically proved for a particular class \cite{Yu}), that any
attractor of some considered system can be \emph{synthesized}
(numerically approximated) by using some parameter switching rule.
The analytical and numerical proofs are based on the fact that the
invariant sets obtained with the control parameter periodically
switched are numerical approximations of those corresponding to the
control parameter replaced with the average of the switched values.
This useful result was intensively verified on several examples with
a numerical algorithm called the Parameter Switching algorithm which
represents an elegant and easy way to numerically approximate any
attractor of a dynamical system, belonging to a general class of
systems, starting from a set of accessible parameter values which
are switched in relative short period of times. The switching rule
can be modeled by some piecewise continuous function. The algorithm
is useful for example in practice, when a desired parameter value
cannot be directly accessed. Also, it can help to understand what
happens in some real systems when the control parameter is switched
by natural or imposed causes.

The Parameter Switching algorithm differs from the known
control/anticontrol algorithms, where the parameter is generally
slightly modified following some very precise rules in order to
modify the behavior of some trajectory. Our algorithm allows the
choice of any deterministic or even random switching rule within a
set of parameter values, the result being an attractor which belongs
to the set of all existing attractors of the underlying system.

The paper is organized on two main parts concerning theoretical
aspects and applications respectively, as follows: Section \ref{doi}
presents the attractors synthesis, where the Parameter Switching is
detailed, Section \ref{attr sinth} presents the numerically evidence
of the Parameter Switching algorithm convergence, and finally, in
Section \ref{applic}, four representative examples are analyzed.
Additional information are presented in five Appendices.

\section {Attractors synthesis}\label{doi}
In this section, the used notions, results, assumptions and the
underlying Initial Value Problem (IVP) modeling a general class of
systems, continuous or discontinuous with respect to the state
variable and of integer or fractional order, are presented.

\subsection{Preliminaries notions and notations}

\noindent All the considered systems can be modeled by the following
IVP

\begin{equation}
\begin{tabular}
[c]{c} $\frac{d^{q}x(t)}{dt^{q}}=f(x(t))+pBx(t)+Cs(x(t))$,~
$x(0)=x_{0},~\ ~t\in I$ \label{IVP}
\end{tabular}
\end{equation}

\noindent where $p$ is a real parameter, $q$ stands for the
derivative order (for $q = 1$, we have the known standard
derivative, while for $q \ne 1$ we have the so called
\emph{fractional derivative}: ${{{d^q}} \mathord{\left/
 {\vphantom {{{d^q}} {d{t^q}}}} \right.
 \kern-\nulldelimiterspace} {d{t^q}}}$),
$f: \mathbb{R}^{n} \rightarrow \mathbb{R}^{n} $ is a nonlinear
vector valued function, at least continuous with respect to the
state variable, $I = \left[ {0,T} \right],\,\,T
> 0$ , \emph{B}, \emph{C} real $n\times n$ squared matrices, and
$ s:\mathbb{R}^{n}\rightarrow \mathbb{R}^{n}, \;
s(x)=(s_1(x_1),\ldots, s_n(x_N))^t$ is a piece-wise continuous
function, being composed in most general cases by signum functions:
$s_i(x_i) = sgn({x_i}), \; i=1,2,\ldots,n$, or e.g. step (Heaviside)
functions.

\noindent It is classically assumed that $q \in \left( {0,1}
\right]$. Function of \emph{q} and \emph{C}, we can have the
situations presented in Table \ref{tab1}.

\noindent Throughout this paper the following assumption will be
considered

\mbox{}

\noindent (\textbf{H1}) {\it The IVP (\ref{IVP}) admits a unique
solution (e.g. Lipschitz continuous)}.

\mbox{}

The control parameter $p$ is considered to be a (periodic) piecewise
constant function $p:I\rightarrow \mathbb{R}$ (an example for a
periodic function $p$ is presented in Fig.\ref{fig0}). As we shall
see next, $p$ can be a periodic or non-periodic function. Other form
of functions for $p$ can be found in \cite{Yu}.

Due to the piece-wise continuity of $p$, the IVP (\ref{IVP}) becomes
non-autonomous. However, for the sake of simplicity, next the time
variable \emph{t} will be omitted unless necessary. Therefore, the
IVP (\ref{IVP}) can be written as follows

\begin{equation}
\frac{{{d^q}x}}{{d{t^q}}} = f(x) + pBx + Cs(x),\,\,x(0) =
{x_0},\,\,t \in I.\label{IVPsimplu}
\end{equation}

\begin{remark}
\noindent The existence and uniqueness conditions for IVPs modeling
\emph{DI} and \emph{DF} systems differ from those for \emph{CI}
systems, and are not presented here (for our class of \emph{DI}
systems they can be found in e.g. \cite{Lempio2}, while for
differential equations of fractional order in \cite{Diethelem2} or
\cite{kaicarte}.)
\end{remark}

The systems chosen to represent in this work the four classes in
Table \ref{tab1} are three-dimensional, but the algorithm and the
underlying results are applicable for any finite lower or higher
dimension $n$.

\noindent The examples treated here are presented in Table
\ref{tab2}: Lorenz system, Sprott system \cite{Sprott}, L\"{u}
system \cite{Lu2} and a fractional variant of the Chua's system
\cite{Brown}. Other examples can be found in
\cite{Danca1,Danca5,DancaW,DancaMorel,Danca4}.

A \emph{global attractor}, roughly speaking, is viewed in this paper
as a state space region of a dynamical system where the system can
enter but not leave, and containing no smaller regions (see e.g.
\cite{kapitan}). The global attractor contains all the dynamics
evolving from all initial conditions. In other words, it contains
all the solutions, including stationary solutions, periodic
solutions, as well as chaotic ones. The term of \emph{local
attractor} is used sometimes for attractors which are not global
attractors \cite{Stuart}.

\noindent The global attractors may contain several local
attractors. Therefore, a global attractor can be considered as being
``composed'' of the set of all local attractors for a given
parameter $p$ value and initial conditions. Each local attractor
attracts trajectories from a subset (basin of attraction) of initial
conditions (for details on the notions of local and global
attractors we refer e.g. to \cite{Milnor,Temam,Hirsch,Hirsch2}).

\begin{remark}
(i) For the sake of simplicity, when a global attractor is composed
by several local attractors, only a single local attractor will be
considered (the choice can be made by appropriate selections of the
initial conditions). Therefore, hereafter, by \emph{attractor} one
understands, simply, either one of the local attractors or the
single local attractor which composes the global attractor;
%\item[(ii)]

\noindent(ii) Due to the predominant numerical characteristics of
the present work, without a significant loss of generality, the
attractors will be considered as approximations, after neglecting a
sufficiently long period of transients \cite{Foias}, of the $\omega
-limit$ sets (the set of points that can be limit of
subtrajectories). Despite the fact that usually these sets are
uncomputable, they can be numerically approximated. Therefore, in
this paper the attractors are considered as being the plots of the
$\omega -limit$ sets.
%\end{description}
\end{remark}

\noindent Let us use throughout the review the following notations

\mbox{}

 \noindent\textbf{Notation 1}\label{notatie}
\mbox{} \noindent- $\mathcal{A}$ the set containing the attractors
depending on $p$, including attractive stable fixed points, limit
cycles and chaotic attractors;

\noindent-  $\mathcal{P}$ the set of all \emph{p} admissible values;

\noindent-  $\mathcal{P_{N}}=\{p_{1},p_{2},\ldots,p_{N}\}\subset
\mathcal{P} $ a finite ordered subset of $\mathcal{P}$;

\noindent- $\mathcal{A}_{N}=\{A_{1},A_{2},\ldots,A_{N}\}\subset
\mathcal{A}$ the set of the attractors corresponding to
$\mathcal{P_{N}};$

\noindent- $I = \bigcup\limits_{j = 1,2, \ldots }
({\bigcup\limits_{i = 1}^N {I_{ij}} })$, where the adjoint
subintervals $I_{ij}$ are of time length $m_{i}h$, where the
"weights" $m_i$ are some positive integers, $h>0$, for $i=1,2,
\ldots, N $ and all \emph{j} (see Fig.\ref{fig1} for the particular
case of the first set of time-intervals $I_{i1}$ for $i=1,2,3,4$ );

\noindent- $p^*$ the \emph{average parameter}

\vspace{1mm}
\begin{equation}\label{averaged}
{p^*} = \frac{{\sum\limits_{i = 1}^N {{m_i}{p_i}} }}{{\sum\limits_{i
= 1}^N {{m_i}} }};
\end{equation}
\vspace{1mm}

\noindent- $A^*$ the $\emph{average attractor}$, obtained for
$p=p^*$.

\begin{remark} \label{bijectie}
% \begin{description}\item[]\end{description} % For new line
\noindent Taking account to the Assumption \textbf{H1} it follows
naturally to define a monotone bijection
$F:\mathcal{P}_{N}\rightarrow\mathcal{A}_{N}$ for some fixed
\emph{N}. Therefore, to each $p \in \mathcal{P_N}$ corresponds a
unique element $A \in \mathcal{A_N}$ and reversely, for each $A \in
\mathcal{A_N}$ there exists $p \in \mathcal{P_N}$ such that $A =
F(p)$ (Fig.\ref{fig2}).
\end{remark}

\subsection{Parameter Switching algorithm}\label{algorithm}

To prove that any attractor can be approximated by switching the
parameter while the underlying IVP is integrated, we need a
numerical algorithm to implement the switches that we name Parameter
Switching (PS) algorithm.

\noindent Let fix for some \emph{N}, the set $\mathcal{P}_{N}$.
Then, ${p^*}$ given by (\ref{averaged}), can be rewritten in the
following form
\begin {equation}\label{avconvex}
 {p^*} =\sum\limits_{i = 1}^N {{\alpha _i}{p_i}} \emph{  with  } {\alpha _i} = {{{m_i}} \mathord{\left/
 {\vphantom {{{m_i}} {\sum\limits_{i = 1}^N {{m_i}} }}} \right.
 \kern-\nulldelimiterspace} {\sum\limits_{i = 1}^N {{m_i}} }},\,\,{p_i} \in
 {P_N}.
\end {equation}

\noindent Because ${\alpha _i} < 1$ and $\sum\limits_{i = 1}^N
{{\alpha _i} = 1} $ , ${p^*}$ enjoys the following property

\mbox{}

\noindent {\textbf{P1.}} \label{convex} {\it For every set
$\mathcal{P}_{N}$, $p^*$  given by (\ref{averaged}) is a convex
combination of $p_i$, $i=1,2,\ldots,N$.}

\mbox{}

To implement the \emph{PS} algorithm, we have to integrate the IVP
(\ref{IVPsimplu}) with a numerical scheme for ODEs with single
step-size {\emph{h}}
 (e.g. the standard
Runge-Kutta method).

\noindent Let first consider that $p$ is a periodic function of
period $T_0$, i.e. $p(t+T_0)=p(t)$ for all $t$ in $I$. While the
solution to the IVP (\ref{IVPsimplu}) is numerically approximated,
the parameter \emph{p} is periodically switched within
$\mathcal{P}_{N}$ in every consecutive time interval $I_{ij}$,
following some designed scheme, denoted hereafter by $S_h$

\begin{equation}\label{S}
    S_{h}\circeq \left[ {{p_1}\left| {_{{I_{1j}}},} \right.{p_2}\left| {_{{I_{2j}}}, \ldots ,{p_N}\left| {_{{I_{Nj}}}} \right.} \right.} \right],\,\,j = 1,2,
    \ldots,
\end{equation}
\mbox{}

\noindent which means that while the IVP (\ref{IVPsimplu}) is
integrated, in each interval ${I_{ij}}$, \emph{p} will be replaced
by $p_{i}$ for every $j=1,2,\ldots $. Thus, for $t\in I_{11}$ ,
$p(t)=p_1$, for $t\in I_{21}$, $p(t)=p_2$ and so on until $I_{N1}$,
when $p(t)=p_N$. On the next interval $I_{12}$, again $p(t)=p_1$ and
so on until the interval $I_{N2}$, when $p(t)=p_N$. The algorithm
repeats on the next set of intervals $I_{i3}, i=1,2,\ldots,N$ and so
on periodically, until $t\geq T$. In other words, $p$ is a piecewise
constant and periodic function of period
$T_0=h\sum\limits_{i=1}^{N}m_{i}$ having the following expression
(See Fig.\ref{fig0},\ref{fig1})
\begin{equation}\label{p0}
p(t) = {p_i},\,\,for\,\,t \in {I_{ij}},\,\,\,i = 1, \ldots
,N,\,\,\,j = 1,2, \ldots
\end{equation}

\noindent The length of the time intervals $I_{ij}$ will be taken as
multiple of $\emph{h}$: $length(I_{ij})=m_{i}h$ for each $\emph{j}$.
Therefore, for a fixed {\emph{h}},  $S_h$ can be noted in a
simplified form
\begin{equation}\label{Ssimplu}
    S_{h}\circeq \left[ {{m_1}{p_1},{m_2}{p_2}, \ldots ,{m_N}{p_N}} \right],
\end{equation}

\noindent which means the following $\emph{p}$ infinite sequence

\begin{center}
\noindent ${m_1}{p_1},{m_2}{p_2}, \ldots
,{m_N}{p_N},{m_1}{p_1},{m_2}{p_2}, \ldots ,{m_N}{p_N}, \ldots $
\end{center}

\noindent For example, $S_h=\left[ {2{p_1},{p_2}} \right]$ for a
given $h$, means that for the time-interval of length $2h$, $p=p_1$
then for the next time-interval of length $h$, $p=p_2$ . Next, for
two integration steps, $p=p_1$, then for one integration step,
$p=p_2$ and so on

\noindent (i.e. periodically with period $T_0=(m_1+m_2)h=3h$).
Schematically, $\emph{S}_h$ can be written as the infinite sequence
$\left[ {2{p_1},{p_2}} \right] =
{p_1},{p_1},{p_2},{p_1},{p_1},{p_2}, \ldots $

\noindent The pseudocode of the $\emph{PS}$ algorithm is given in
Table \ref{tab3}.

\noindent It is easy to verify that
\[
p^\ast =\frac{1}{T_{0}}\int\limits_{t}^{t+T_{0}}p(\tau )d\tau
,~~t\in I.
\]

\noindent \textbf{Notation 2} Let denote by $A^\circ$ the attractor,
obtained with the \emph{PS} algorithm, called hereafter the
\emph{synthesized attractor}.

\begin{remark}\label{random}
%\begin{description}\item[]\end{description}  % Put only if you want new line
\noindent It is easy to see that, for some given $p$, the relation
(\ref{averaged}) considered as equation, may have several solutions.
For example, if we set $N=2$, and want to obtain $p^*=4$ using the
scheme $S_h=[m_1p_1,m_2p_2]$, for $p_1=2$ and $p_2=6$, we can choose
$m_1=m_2=1$ but also $m_1=m_2=3$ to verify (\ref{averaged}). If we
fix $m_1=3$ and $m_2=1$, in order to obtain $p^*=4$, we can use
$p_1=2$ and $p_2=10$, but also $p_1=p_2=4$.
\end{remark}

\section{Numerical proof of PS algorithm convergence}\label{attr sinth}

In this section we prove numerically that for a chosen set of
attractors $\mathcal{A}_{N}$, the synthesized attractor $A^\circ $
obtained with the \emph{PS} algorithm  belongs to $\mathcal{A}_{N}$
and, moreover, $A^\circ$ is approximatively identical to $A^*$.

\noindent In order to compare two attractors, we have to provide the
following criterion

\mbox{}

\noindent $\textbf{Criterion}$ We shall say that two attractors are
\emph{approximatively identical}  ($\emph{AI}$) if their
trajectories in the phase space approximatively coincide, and the
Hausdorff distance (Appendix \ref{D}) is small enough.

\mbox{}

\noindent In our numerical experiments Hausdorff distance was of
order of $10^{-4}-10^{-3}$.

\noindent Due to the bijectivity of \emph{F}, considering the total
order over the set $\mathcal{P}_{N}$, it is reasonable to consider
that the following property holds

\mbox{}

\noindent \textbf{P2.} $\mathcal{A}_{N}$ {\it is an ordered set
endowed with the $\mathcal{P}_{N}$ order induced by the function} $\emph{F}$.\\

\noindent Moreover, the same order can be found over the sets
$\mathcal{P}_{N}$ and $\mathcal{A}_{N}$ considered as intervals:
$\left[ {{p_1},{p_N}} \right]$ and $\left[ {{A_1},{A_N}} \right]$
respectively and, without losing generality, we can consider that
${A_i} = F({p_i}),\,\,i = 1,2, \ldots ,N$ (Fig.\ref{fig3}). This
property is outlined in all bifurcation diagrams.

\mbox{}

 \noindent \textbf{Notation 3} Let denote $\emph{AI}$ by
$"\cong"$.

\mbox{}

\noindent Next, in order to prepare the proof of the main result
regarding the parameter switching, we introduce the following lemma

\begin{lemma}\label{lemma}
Given N and $\mathcal{P_N}$, $A^\circ \cong A^*$.
\end{lemma}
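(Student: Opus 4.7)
The plan is to treat the periodic switching as a fast time-dependent perturbation of the averaged IVP and invoke an averaging argument, exploiting the linear dependence of the right-hand side of (\ref{IVPsimplu}) on $p$. Concretely, I would rewrite the switched vector field as
$$
f(x)+p(t)Bx+Cs(x)=\bigl(f(x)+p^{*}Bx+Cs(x)\bigr)+(p(t)-p^{*})Bx,
$$
where the first bracket is precisely the vector field generating the average attractor $A^{*}$, and the second term is a periodic perturbation of period $T_{0}=h\sum_{i=1}^{N}m_{i}$. The explicit formula (\ref{p0}) together with the definition (\ref{averaged}) of $p^{*}$ forces the zero-mean condition
$$
\int_{t}^{t+T_{0}}(p(\tau)-p^{*})\,d\tau=0,\qquad t\in I,
$$
which is the equality already recorded just before Notation~2.

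Next, for the \emph{CI} class I would compare the switched solution $x^{\circ}$ and the averaged solution $x^{*}$ starting from the same initial datum. Writing both in integral form and subtracting, the zero-mean property lets one integrate the perturbation term by parts over each subinterval of length $T_{0}$, bounding its contribution by a constant multiple of $h$ while trajectories remain in a compact region. A Gronwall inequality then yields
$$
\|x^{\circ}(t)-x^{*}(t)\|\le K(t)\,h,
$$
on any bounded time interval on which both solutions stay bounded; this is the standard averaging conclusion for (\ref{IVPsimplu}), and it is essentially the content of \cite{Yu} for the particular class treated there.

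Then I would lift this trajectory-wise estimate to closeness of $\omega$-limit sets. The averaged system is dissipative (it admits the attractor $A^{*}$, hence an absorbing set), so after a finite transient both $x^{\circ}$ and $x^{*}$ enter a common absorbing region. Upper semi-continuity of global attractors under $O(h)$ perturbations of the vector field (a standard tool in the theory of dissipative systems, see \cite{Temam,Hirsch,Hirsch2}) then gives
$$
d_{H}(A^{\circ},A^{*})\longrightarrow 0\quad\text{as }h\to 0,
$$
where $d_{H}$ is the Hausdorff distance (Appendix \ref{D}), which is exactly the criterion $A^{\circ}\cong A^{*}$.

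The main obstacle is the last step when $A^{*}$ is chaotic: individual trajectories diverge exponentially, so the constant $K(t)$ above blows up and one cannot directly compare $x^{\circ}$ and $x^{*}$ on the infinite horizon needed to identify $\omega$-limit sets. The customary workaround is to argue set-theoretically, via the absorbing set together with upper semi-continuity, rather than trajectorywise. For the \emph{DI}, \emph{CF} and \emph{DF} classes of Table~\ref{tab1} the situation is worse still, since sharp averaging theorems are weak (Filippov regularization) or essentially absent (fractional memory kernel), and this is precisely why the authors replace the analytical argument by the numerical verification announced in Section~\ref{attr sinth} for those three cases, keeping the analytic route only for the \emph{CI} class as in \cite{Yu}.
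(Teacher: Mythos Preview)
Your proposal takes a genuinely different route from the paper. In the body of the paper the ``proof'' of Lemma~\ref{lemma} is not analytical at all: it is an explicit appeal to numerical evidence (superimposed phase portraits, Poincar\'e sections, histograms, cross-correlation, and measured Hausdorff distances of order $10^{-4}$--$10^{-3}$) gathered across the four system classes. The analytical argument you sketch --- split off $(p(t)-p^{*})Bx$, use the zero-mean property, Gronwall, then pass to attractors --- is closer in spirit to the sketch the paper relegates to Appendix~\ref{A} (following \cite{Yu}) for the \emph{CI} class only, and even there the mechanism is different: the appendix linearizes about the averaged solution, \emph{assumes} uniform exponential stability of the linearized averaged flow, and compares the linearized errors via a Peano--Baker expansion rather than invoking attractor semi-continuity.

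Your route is cleaner on finite time horizons, but the last step has a real gap. Upper semi-continuity of attractors under $O(h)$ perturbation gives only $\sup_{a\in A^{\circ}}d(a,A^{*})\to 0$; it does not by itself give the reverse inequality, hence not $d_{H}(A^{\circ},A^{*})\to 0$. Lower semi-continuity generally needs additional structure (hyperbolicity, or a separate chain-recurrence/shadowing argument) that neither you nor the paper provide. In addition, the perturbation here is non-autonomous and time-periodic, not a static vector-field perturbation, so the results you cite from \cite{Temam,Hirsch,Hirsch2} do not apply verbatim; one would need an averaging-for-attractors statement or a skew-product reformulation. You correctly diagnose that these obstacles are precisely why the paper falls back on numerical verification outside the \emph{CI} class, and why even the analytical sketch in Appendix~\ref{A} restricts to the exponentially stable setting rather than attempting the general chaotic case.
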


\begin{proof}
The lemma has been checked numerically with tools such as:
histograms, Poincar\'{e} sections, time series, cross-correlation
(Appendix \ref{C}) and Hausdorff distance (Appendix \ref{D}). The
numerous examples, show that the attractor $A^\circ$, obtained with
\emph{PS} algorithm and $A^*$ obtained for $p=p^*$, are \emph{AI},
the degree of the identity depending less or more on the system
characteristics and, unavoidably, on the numerical errors. Hausdorff
distance, for all considered systems, was of order of
$10^{-4}-10^{-3}$.
\end{proof}

\noindent The sketch of the analytical proof of this lemma,
presented in \cite{Yu} for the case of $CI$ systems, can be found in
Appendix \ref{A}.

\begin{remark}
%\begin{description}\item[]\end{description}
\noindent Applying the symbolic computation for several examples of
\emph{CI} systems, with the scheme $S_h$ for $N\leq3$, the IVP
(\ref{IVPsimplu}) was integrated with the forward Euler method. The
result shown that the (Euclidean) difference between the two
solutions corresponding to $p=p^*$ and to $p$ switched with the
\emph{PS} algorithm, is of order of $O(h^2)$, the same as the error
of the considered Euler method.
\end{remark}

\noindent Yet, the main result which can be numerically proved, can
be introduced.

\begin{theorem}\label{prop}
Given N and $\mathcal{P_N}$, $A^\circ$ belongs to $(A_1,A_N)$.
\end{theorem}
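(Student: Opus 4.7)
The plan is to chain the three tools already established, namely property \textbf{P1}, property \textbf{P2}, and Lemma \ref{lemma}, so that the theorem is reduced to locating the single point $A^{*}$ inside $(A_{1},A_{N})$ and then transferring this location to $A^{\circ}$ via the equivalence $\cong$.

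First I would deal with the parameter side. By \textbf{P1}, $p^{*}$ is a convex combination $\sum_{i=1}^{N}\alpha_{i}p_{i}$ with $\alpha_{i}\in(0,1)$ and $\sum_{i=1}^{N}\alpha_{i}=1$. Assuming the non-trivial case where $\mathcal{P}_{N}$ is not a singleton, the strict convexity of this combination, together with the total order $p_{1}<p_{2}<\cdots<p_{N}$ on $\mathcal{P}_{N}$, yields immediately $p_{1}<p^{*}<p_{N}$, i.e.\ $p^{*}\in(p_{1},p_{N})$.

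Next I would transfer this to the attractor side. Viewing, as in the discussion just before Notation 3, the sets $\mathcal{P}_{N}$ and $\mathcal{A}_{N}$ as embedded in the intervals $[p_{1},p_{N}]$ and $[A_{1},A_{N}]$, and using \textbf{P2} together with the monotone bijection $F$ extended to these intervals (the monotonicity being precisely the structure one reads off the bifurcation diagrams), the inclusion $p^{*}\in(p_{1},p_{N})$ gives $A^{*}=F(p^{*})\in(F(p_{1}),F(p_{N}))=(A_{1},A_{N})$. Finally, Lemma \ref{lemma} provides $A^{\circ}\cong A^{*}$, and since $\cong$ preserves the enclosing interval up to the small Hausdorff error discussed in the \textbf{Criterion}, we conclude $A^{\circ}\in(A_{1},A_{N})$, establishing the theorem.

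The main obstacle is the second step: interpreting the bracket $(A_{1},A_{N})$ in a way that makes the monotonicity argument genuinely rigorous, since $F$ is defined only on the finite set $\mathcal{P}_{N}$ and is extended to the interval $[p_{1},p_{N}]$ only implicitly through what bifurcation diagrams reveal. A second, more minor, obstacle is making sure that the relation $\cong$, which is approximate by definition, still guarantees strict containment in $(A_{1},A_{N})$; this is fine as long as $A^{*}$ is bounded away from the endpoints $A_{1}$ and $A_{N}$ by more than the Hausdorff tolerance, a condition automatically satisfied whenever $p^{*}$ is a non-degenerate convex combination and $F$ is non-constant near the endpoints.
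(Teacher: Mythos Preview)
Your proposal is correct and follows essentially the same route as the paper: use \textbf{P1} and \textbf{P2} to place $A^{*}$ in $(A_{1},A_{N})$, then invoke Lemma \ref{lemma} to transfer this to $A^{\circ}$ via $\cong$. The paper's proof is terser and does not spell out the two caveats you raise about the extension of $F$ and the approximate nature of $\cong$, but the logical skeleton is identical.
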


\begin{proof}

\noindent By the properties \textbf{P1} and \textbf{P2}, it follows
that $A^*\in(A_1,A_N)$. Next, by the Lemma \ref{lemma}, the
attractor $A^\circ$ synthesized with some scheme $S_h$, is \emph{AI}
to $A^*$. Thus, $A^\circ \cong A^*$ and therefore $A^\circ$ belongs
to $(A_1,A_N)$, which completes the proof (see Fig.\ref{fig00}).
\end{proof}

\noindent Summarizing, for every finite set $\mathcal{P_N}$ and
numbers $m_i$, the synthesized attractor $A^\circ$ will belong to
$(A_1,A_N)$, and differs from every attractor $A_i\in \mathcal{A_N},
\ \ {\it i}=1,2,\ldots N$ (due to the convexity property).
Reversely, any attractor of $\mathcal{A_N}$ can be considered as
being synthesized with the $\emph{PS}$ algorithm, by means of a
finite set of attractors of $\mathcal{A_N}$.

For the continuous case, the analytical proof in \cite{Yu}, shows
that the solutions of the equation (\ref{IVPsimplu}) with $p$
switched within $\mathcal{P_N}$ with \emph{PS} algorithm and that
with $p$ replaced with $p^*$ can be arbitrarily close. Therefore,
the underlying invariant sets (attractors in our case) are also
arbitrarily (\emph{AI}) close (\cite{Stuart}, Ch. 6).

Due to the mentioned convexity property, whatever kind of
combinations of $p_i$ and $m_i$ values are considered for a fixed
$N$, $p^*$ will still belong to the interval $(p_1,p_N)$. Therefore,
for $p$ we can chose any kind of piecewise continuous functions,
with the only condition that their values and $p^*$ belongs within
$\mathcal{P_N}$ (see examples in \cite{Yu}).

The proof of the convergence (analytically or numerically verified)
does not depends on the periodicity of $p$ but only on the convexity
of $p^*$. Therefore, it is obvious that not only periodic schemes
(\ref{Ssimplu}) can be used, but even random ones \cite{Danca1}. One
of the simplest way to implement randomly the \emph{PS} algorithm,
once $N$ is fixed, is to chose first $p_i$ and $m_i$ in some random
manner, after which the \emph{PS} algorithm is started (see the
example of Sprott system, Subsection \ref{DI}). Obviously, there are
several other random ways such as: choosing randomly $m_i$ and $p_i$
while the \emph{PS} is running, or switching the order of $p_i$ and
so on. Now, the averaged $p^*$ has to be determined with the
following relation
\begin{equation}\label{p random}
   {p^*} = \frac{{\sum\limits_{i = 1}^N {m_i^,{p_i}} }}{{\sum\limits_{i = 1}^N {m_i^,} }},\
\end{equation}

\noindent where $m^{'}_i$ denote the number of times when $p_i$ is
chosen by the algorithm for $t\in I$.

\begin{remark} \label{remarca cu erori}

\noindent \label{precizie} (i) The ``structural stability'' of the
$PS$ algorithm presents some obvious limitations due firstly to his
numerical approach (some details and other related aspects can be
found in \cite{Danca1}). For example, for relative large values of
$m_i$, the trajectory of $A^\circ$ could present some ``corners'' (a
maximum difference between $m_i$ should generally be about $(20\div
25)h$). The values for $p_i$ could be taken over the entire set
$\mathcal{P_N}$ without distinguishable differences between
$A^\circ$ and $A^*$. Excessive number of decimals for $p^*$ could
lead too to some differences between the two attractors $A^\circ$
and $A^*$. Even for large values for \emph{N}, $A^\circ$ and $A^*$
still remain close each other;
%\item[(ii)]

\noindent (ii) The cross-correlation and time series show an
interesting characteristic: the trajectories corresponding to
$A^\circ$ and $A^*$, even in the phase space and time
representations are $\emph{AI}$, they are dephased in time (see
cross-correlation in the figures);
%\item[(iii)]
\noindent \label{infinit} (iii) For chaotic attractors, the
\emph{AI} is obtained only ``asymptotically'' since the necessary
time to fully approximate the attractor is, theoretically, infinite.
%\item[(iv)]
\end{remark}

\noindent The $\emph{PS}$ algorithm can be used to ``control'' or
``anticontrol'' dynamical systems modeled by the IVP (\ref{IVPsimplu})
when some targeted parameter value cannot be accessed directly (see
\cite{Danca3}). For this purpose, we have to choose $p_i$, $m_i$ and
some scheme $S_h$ to obtain the targeted value $p^*$. However, while
almost all known control/anticontrol algorithms ``force'' some
trajectory to change its characteristics and behavior, the $PS$
algorithm allows to obtain any desired already existing attractor of
$\mathcal{A_N}$.

\section{Applications}\label{applic}

This section is devoted to the applications of the \emph{PS}
algorithm to the four classes of dynamical systems (see Tables
\ref{tab1} and \ref{tab2}) to synthesize attractors. In this purpose
we have to choose $N, \mathcal{P_N}$ and $S_h$ for each system, such
that a desired value $p^*$ (which can be taken e.g. from the
bifurcation diagram) is obtained.

To apply the \emph{PS} algorithm for \emph{CI} systems, we used the
standard Runge-Kutta method (with the step size $h$ of order between
$10^{-4}$ and $10^{-2}$, depending on the characteristics of the
considered system), while for the discontinuous and fractional
systems, we have chosen special numerical methods. The bifurcation
diagrams, time series, histograms and cross-correlations were
determined and plotted superimposed for the first state variable
$x_1$. The Poincar\'{e} sections have been determined for the plane
$x_3=const$. Some bifurcation diagrams, like the one for the Sprott
system and especially for the fractional L\"{u} and Chua systems,
require an extremely long computer time (see Appendix \ref{B} ). For
discontinuous systems (of integer and fractional order), some
'corners' can be remarked, typical for these kind of systems (the
solution for the underlying IVP are generally not smooth \cite
{Lempio2,Lempio}). As stated before, the \emph{AI} was verified for
all the considered systems via superimposed phase portraits,
Poincar\'{e} sections, histograms, time series and also with
cross-correlation and Hausdorff distance. For all the considered
cases, the results lead to the same conclusion: Lemma \ref{prop}
applies to all considered classes of systems.

\subsection{Continuous dynamical systems of integer order} The
$\emph{PS}$ algorithm was tested on several examples of \emph{CI}
systems such as: Lorenz, Chen, R\"{o}ssler, Rabinovitch-Fabrikant
\cite{Danca1}, Hindmarsh-Rose neuronal system \cite{DancaW},
networks \cite{DancaMorel} and Lotka-Volterra \cite{Danca4}. Here,
we consider the representative case of the Lorenz system.

\noindent For this class of systems, the IVP (\ref{IVPsimplu}) has
to be considered for the particular case $q=1$ and $C=O_{n\times
n}$, namely (Table \ref{tab2})
\begin{equation}\label{Lorenzeq}
\dot x = f(x) + pBx,\,\,x(0) = {x_0},\,\,t \in I.
\end{equation}

\noindent The used numerical method is the standard Runge-Kutta with
integration step-size $h=0.01$.

\begin{itemize}
\item Let first consider the scheme (\ref{Ssimplu}) for $N=2$: $[{m_1}{p_1},$ ${m_2}{p_2}]$ with $p_1=90$, $p_2=96$, and $m_1=m_2=1$. Then $p^*$, given by the relation (\ref{averaged}), is $p^*=(m_1\times p_1+m_2\times p_2)/(m_1+m_2)=93$. Applying the $\emph{PS}$ algorithm, the synthesized attractor $A^\circ$ (red plot, Fig.\ref{fig4} a) is a stable limit cycle that is $\emph{AI}$ with the average attractor $A^*$ (superimposed blue plot over $A^\circ$) for $p^*=93$. The $\emph{AI}$ is emphasized in addition to the over-plot in the phase space, by the superimposed Poincar\'{e} sections with the plane $x_3=130$ (Fig.\ref{fig4}b) and superimposed histograms for the first state variable $x_1$ too (Fig.\ref{fig4}c ). The cross-correlation (Fig.\ref{fig4} d) shows that the time series corresponding to $A^\circ$ and $A^*$ are \emph{AI}, but dephased. This time-difference between the corresponding trajectories is better remarked from the time series corresponding to the first  component $x_1$ (see Fig.\ref{fig4} e).
\item Let next consider the case $N=5$ with the scheme $\left[{2{p_1},3{p_2},2{p_3},4{p_4},3{p_5}} \right]$ for $p_1=125$, $p_2=130$, $p_3=140$, $p_4=144$ and $p_5=220$. In order to facilitate the use of the \emph{PS} algorithm, the bifurcation diagram will be used (Fig.\ref{fig5}). Now, $p^*=154$, and the synthesized attractor $A^\circ$ is again a stable limit cycle which is \emph{AI} with $A^*$ (Fig.\ref{fig6} f) even $A_{1-4}$ are chaotic and only $A_5$ is a stable limit cycle (Fig.\ref{fig6} a-e). Both attractors $A^\circ$ and $A^*$ are $\emph{AI}$ (see Fig.\ref{fig6} g-h wherefrom the \emph{AI} property can be remarked). The time series being dephased (Fig.\ref{fig6} i,j), the trajectories of the attractors $A^\circ$ and $A^*$ are \emph{AI}, but time dephased.
\item If for the same scheme  $\left[{2{p_1},3{p_2},2{p_3},4{p_4},3{p_5}} \right]$ we choose $p_5=166$ instead $p_5=220$, the synthesized attractor $A^\circ$ is chaotic and \emph{AI} with $A^*$ for $p^*=142.428$ (Fig.\ref{fig7}). However, now the $\emph{AI}$ is only an $\emph{almost}$ identity (see Remark \ref{remarca cu erori} (iii)). The Poincar\'{e} section (Fig.\ref{fig7} c) was obtained with the plane $x_3=145$. The cross-correlation shows that the underlying trajectories of $A^\circ$ and $A^*$ are dephased. Because the trajectories are chaotic, the time series to underline this time-difference is irrelevant in this case.
\end{itemize}

\noindent For all analyzed examples, the Hausdorff distance was of
order $10^{-3}$.

\noindent Other examples of \emph{CI} systems can be found in \cite
{Yu}.

\subsection{Continuous dynamical systems of fractional
order}\label{CF}

Fractional mathematical concepts allow to describe certain real
objects more accurately than the classical ``integer'' methods.
Examples of such real objects that can be elegantly described with
the help of fractional derivatives displaying fractional-order
dynamics, may be found in many fields of science and exhibit a wide
range of rich dynamics. Therefore, the fractional calculus starts to
attract increasing attention of mathematicians but also of
physicists and engineers (see e.g. \cite{Polod,Hilfer,Ahmed,el}).

Many \emph{CF} systems, can be modeled by the IVP (\ref{IVPsimplu})
with $q\in (0,1)$ and $C=O_{n\times n}$. The fractional derivative
$\frac{{{d^q}}}{{d{t^q}}}$ is generally denoted using the Caputo
differential operator of order {\emph{q}}, $D^q_*$ (see e.g.
\cite{Podlubny}). Thus, the IVP (\ref{IVPsimplu}) becomes

\begin{equation}
%\begin{align}
\label{IVPfrac} D^{q}_*x =f(x)+pBx, \quad  x^{(k)}(0) =
x_{0}^{(k)},~ (k = 0,1,\ldots, \lceil q \rceil-1 ).
%\end{align}
\end{equation}

\noindent $\lceil .\rceil$  denotes the ceiling function that rounds
up to the next integer, and $D^m_*=\frac{d^m}{d{t^m}}\
 $, with $m=\lceil q \rceil$, is the standard
differential operator of the integer order $\lceil q \rceil\in
\mathbb{N}$. The Caputo operator, with starting point 0, has the
following expression
\begin{equation*}\label{Caputo}
D^q_* x(t) = \frac 1 {\Gamma(m - q)} \int_0^t (t - \tau)^{m - q - 1}
D^m_* x(\tau) {\mathrm d} \tau.
\end{equation*}

\noindent where $\Gamma$ is the Gamma function (Appendix \ref{B}).
Because $D^q_*$ has an \emph{m}-dimensional kernel, \emph{m} initial
conditions need to be specified. Therefore, for the common case
chosen in this paper $0 < q<1$, we have to specify just one
condition, in the classical form \cite{Diethlem et al}: $x^{\left( 0
\right)}(0)=x_0$.

To implement the $\emph{PS}$ algorithm in this case, it is necessary
to choose a numerical method for the solution to the IVP
(\ref{IVPfrac}). In this purpose we use the fractional Adams-
Bashforth- Moulton method (see Appendix \ref{B}) introduced in
\cite{Diethlem et al}.

\noindent Let choose for our purpose the fractional variant of the
L\"{u} system (see Table \ref{tab2}) which unifies the Lorenz and
Chen systems, presented by L\"{u} {\it et al.} in \cite{Lu}. As many
of the real fractional systems have the order of the fractional
differential operators less than 1, we fix in this paper $q=0.9$
(see \cite{Danca6}) which is a typical value exhibiting all the
relevant phenomena (the dynamics of this system, as \emph{q} varies,
can be found in \cite{Lu2}, while some aspects of the attractors
synthesis of the fractional L\"{u} system is presented in \cite{cu
Dieth}).

\noindent Now, the IVP (\ref{IVPfrac}) was integrated with the
fractional Adams-Bashforth-Moulton method with step size $h=0.005$
and $15000\div 20000$ steps, in function of the dynamics of the
synthesized attractor $A^\circ$.

\begin{itemize}
\item A chaotic attractor $A^\circ$ can be obtained with the scheme $[1p_1,1p_2]$ (see Fig.\ref{fig8} c) with $p_1=32$ and $p_2=34.5$. The attractors corresponding to $p_1$ and $p_2$ are plotted in Fig.\ref{fig8} a, b which, as can be seen in the bifurcation diagram in Fig.\ref{fig9}, are stable limit cycles. $p^*=33.25$, and due to the chaotic behavior, the \emph{AI} between $A^\circ$ and $A^*$ is only asymptotic (see the histograms in Fig.\ref{fig8} d and the Poincar\'{e} sections in Fig.\ref{fig8} f). Even $A^\circ$ and $A^*$ are chaotic, from the cross-correlation (Fig.\ref{fig8} e), one can see  that they are dephased, but still \emph{AI}.
\item If we choose the scheme $[1p_1,1p_2]$, with $p_1=33.5$ and $p_2=35.5$, a stable limit cycle $A^\circ$, for which $p^*=34.5$, is obtained (Fig.\ref{fig10}). The \emph{AI} can be remarked from the Poincar\'{e} section and histograms (Fig.\ref{fig10} b,c). Again, the time difference between the underlying time series can be seen from the cross-correlations (Fig.\ref{fig10} d) and time series (Fig.\ref{fig10} e)
\end{itemize}

\noindent For all tested \emph{CF} systems, the Hausdorff distance
was only of order of $10^{-2}$, compared e.g. with $CI$ systems,
where it was of order of $10^{-3}$. This is explainable due to the
well known $O(h^2)$ error bound for the used one-step
Adams-Bashforth-Moulton method for fractional systems (detailed
discussions on errors can be found in \cite{Diethelem2}).

\subsection{Discontinuous dynamical systems of
integer order}\label{DI}

Differential equations with discontinuous right-hand side, model a
whole variety of realistic applications: dry friction, electrical
circuits, oscillations in visco-elasticity, brake processes with
locking phase, oscillating systems with viscous dumping,
electro-plasticity, convex optimization, control synthesis of
uncertain systems and so on (see e.g. \cite{Popp,Deimling,Wierci}
and the references therein).

For our class of \emph{DI} systems, $q=1$ and $C\neq O_{n\times n}$
and the IVP (\ref{IVPsimplu}) becomes

\begin{equation}\label{IVPdisc}
    \dot x = f(x) + pBx + Cs(x),\,\,\,\,x(0) = {x_0},\,\,\,t \in I.\
\end{equation}

\noindent In this case, the right-hand side is discontinuous for a
null set of points $M$ where $s$ vanishes, and continuous in
$D=\mathbb{R}^n\setminus M$\footnote{In \cite{Danca2007} a
classification of systems modeled by the IVP (\ref{IVPdisc}) is
presented.}. Obviously, the IVP (\ref{IVPdisc}) cannot be solved
with classical methods. For example, for the equation
\begin{equation}\label{exdisc}
\dot x = 2 - 3{\mathop{\rm sgn}} (x),
\end{equation}

\noindent where $M = R\backslash ({D_1}\bigcup {{D_2}) = \{ 0\}}$
with $D_1=(-\infty,0)$, $D_2=(0,\infty)$, the classical solutions,
for $x\neq 0$, are
\begin{equation}
x(t) = \left\{ \begin{array}{l}
 5t + {C_1},\,\,\,x < 0, \\
  - t + {C_2},\,\,x > 0, \\
 \end{array} \right.
\end{equation}

\noindent with the integration constants $C_1, C_2$ but, as $t$
increases, these solutions tend to the line $x = 0$, where they
cannot continue to evolve along this line since the function $x(t) =
0$ does not satisfy the equation (Fig.\ref{fig11}). Thus, there is
no classical solution starting from $0$.

Therefore, the problem has to be restarted as a differential
inclusion by using, for example, the Filippov regularization
(Appendix \ref {filippov}). Thus, the IVP (\ref{IVPdisc}) transforms
into a differential inclusion (set-valued IVP)

\begin{eqnarray}
\label{IVPdi} \overset{.}{x}\in f(x)+pBx+CS(x),~~~x(0)=x_{0},~
\text{\emph{for~~almost~all}}~~t\in I, \nonumber
\end{eqnarray}

\noindent where $S$ is the setvalued variant of $s$. On mild
assumptions, a differential inclusion has a solution that happens to
be even unique, but it could have multiple solutions too. To find
them numerically, in our particular case of the IVP (\ref{IVPdisc}),
we can use the standard Runge-Kutta method within $D$ and a special
numerical method for differential inclusions in $M$ (the simplest
forward Euler method here, see Appendix \ref{filippov}).

\noindent Once we set the numerical method for the IVP
({\ref{IVPdisc}), we can apply the \emph{PS} algorithm. For this
class of \emph{DI} systems, we choose the Sprott system
\cite{Sprott} (Table \ref{tab2}).

\begin{itemize}
\item First, let us choose $N=2$ and the scheme $[1p_1,1p_2]$ for $p_1=0.5$ and $p_2=0.528$, for which the corresponding attractors $A_1$ and $A_2$ are chaotic (Fig.\ref{fig12}). We have chosen this scheme such that the obtained average value $p^*=0.514$ belongs to a stable periodic window in the bifurcation diagram. Therefore, the synthesized attractor is a stable limit cycle (Fig.\ref{fig13} c). The \emph{AI} is underlined by the superimposed Poincar\'{e} sections (with the plane $x_3=0$, Fig.\ref{fig13} d) and histograms (Fig.\ref{fig13} e). The time-difference between the trajectories is remarked from the cross-correlation (Fig.\ref{fig13} f) and time series (Fig.\ref{fig13} g).

\item As seen in Section 3, \emph{PS} algorithm can be implement in random manners too. For example, for $N=100$ if one choose randomly  $m_i \in \{ 1,2,3 \}$ and $p_i \in [0.45,0.65]$ for $i=1,2,\ldots,100$ with uniform distribution, and with the obtained values we launch \emph{PS}, the synthesized attractor $A^\circ$ is still \emph{AI} to the average attractor $A^*$ (Fig.\ref{fig14}). However, taking into account the asymptotic generation of chaotic attractors, and the relative large value for $N$, the small differences between the two attractors, $A^\circ$ and $A^*$ are explainable.
\end{itemize}

\subsection{Discontinuous dynamical systems of
fractional order} \label{DF}

There are real discontinuous dynamical systems which display
fractional-order dynamics. We consider here the following class of
\emph{DF} systems, modeled by the IVP (\ref{IVPsimplu}) for $C\neq
O_{n\times n}$ and $p<1$
\begin{equation}\label{fracdisc}
\frac{{{d^q}x}}{{d{t^q}}}= f(x) + pBx + Cs(x),~ x(0) = {x_0},~ t \in
I.
\end{equation}

\noindent In \cite{Dancax} it is proven that the IVP admits
solutions which can be numerically determined.

\noindent Shortly, the IVP is transformed first into a differential
inclusion via the Filippov regularization (as in Subsection
\ref{DI}). Next, using the Cellina's theorem (see e.g. \cite[p.
84]{Aubin1}) the set-valued IVP of fractional-order is transformed
into a continuous single-valued of fractional-order IVP (see for
continuous approximation of \emph{DI} systems the way chosen in
\cite{DancaCodr}). The approximation is made in a sufficiently small
$\varepsilon$-neighborhood of the discontinuity points. To be
precise, let us consider the simplest example of the scalar
function, widely used in examples: $s (x) = c\,sgn(x)$. To
approximate $s(x)$ in an $\varepsilon$-neighborhood of $x=0$, we can
choose one of the simplest function, the \emph{sigmoid}
$h_\varepsilon$
\begin{equation}
h_{\varepsilon }\left( x\right) =c
\bigg(\frac{2}{1+e^{-x/\varepsilon }}-1\bigg).
\end{equation}

\noindent For our general case of the IVP (\ref{fracdisc}), the
continuous approximation leads to the following continuous IVP of
fractional-order
\begin{equation}
\frac{{{d^q}x}}{{d{t^q}}} - f(x) - pBx = \left\{
{\begin{array}{*{20}{c}}
   {Cs(x),\,\,\,for\,\,\,x \notin M,}  \\
   {{h_\varepsilon }(x),\,\,\,\,\,for\,\,\,x \in M,\,}  \\
\end{array}} \right.
\end{equation}

\noindent where $h_\varepsilon(x)$ is the
$\varepsilon$-approximation of $Cs(x)$ in the
$\varepsilon$-neighborhood of the points $x\in M$, verifying the
continuity condition $h_\varepsilon (x)=Cs(x)$ on the boundary of
the $\varepsilon$-neighborhood \cite{DancaCodr}. In this way, the
discontinuous IVP became a continuous one of fractional order and a
numerical scheme for fractional-order differential equations, such
as the Adams-Bashforth-Moulton method presented in Subsection
\ref{CF}, can be used.

We consider for this case the fractional variant of a discontinuous
Chua system presented by \cite{Brown} (Table \ref{tab2}) for
$q=0.98$ (smaller values gives not rich dynamics).

\noindent As it can be seen from the bifurcation diagram
(Fig.\ref{fig15}), for $p\in(12, 12.55)$, there exists a narrow band
of a kind of ``chaotic saddle''. Within this window, the underlying
chaotic attractors look as being ``embedded'' within this transient
chaos (see for example the attractor $A_1$ in Fig.\ref{fig16})).

\begin{itemize}
\item Using the scheme $[2p_1,p_2]$ with $p_1=12.5$ and
$p_2=17$, the obtained synthesized attractor $A^\circ$ is \emph{AI}
with $A^*$ for $p^*=14$ (Fig.\ref{fig16}).
\end{itemize}

As shown in \cite{Danca6}, we found numerically that in these
systems of lower than third-order (i.e. $3*q$ which, for $q<1$, is
less than 3) chaos still may appear (as it is known, in the case of
integer order, according to the well-known Poincar\'{e}-Bendixon
theorem, chaos appears only at systems of minimum order three).

\section*{Conclusions}

In this review we have presented the parameter switching algorithm
according to which any attractor of a dynamical system belonging to
a large class of systems, may be numerically approximated
(synthesized). The attractors synthesis is achieved by using the
\emph{PS} algorithm, which switches periodically or randomly the
parameter. This facility is enabled by the convexity of $p$. The
average and synthesized attractors are AI and their underlying
trajectories, time dephased. The review is legitimated by the more
than ten published papers each of them containing several
applications.

As expected, the performance of the \emph{PS} algorithm is limited
due to the errors of the used numerical method, the length of the
time-subintervals $I_k$, $k=1,2,...,N$, the number of digits for
$p$, the step size $h$ and the distance in the parameter space
between $p_k$. Thus, we found that $N$ is not a critical parameter
(it could be even about $100$). The length of $I_k$ (i.e. the value
for $m_k$) is a critical parameter indicating for how long time the
control parameter of the considered system can take the values
$p=p_k$. We found that a maximum value for $m_k$ can be taken about
$25h$. For $p$, $3-4$ digits are enough to be compatible with the
smallest distance between the $p_k$ in the bifurcation diagrams.
Moreover, some real physical chaotic systems may have an infinite
number of different states or limit cycles with infinite period. But
a computer simulated system has a finite number of states; if the
precision of the computer is $n$ bits and the system to be modeled
has $k$ variables, the total number of system states is limited to
$2^{k*n}$; hence, given a determined state, it will be repeated
sooner or later and the system will become periodic, with a period
equal to the separation of the two occurrences of the state. The PS
method can alleviate this inaccuracy and make possible the
approximation of a computer simulated system to a real one, although
it may be necessary to use a sequence of parameter values lasting as
the whole segment of the system to be modeled (se e.g. \cite{Li}).

Some open problems are: the analytical proofs for the Lemma
\ref{lemma} for \emph{CF}, \emph{DI} and \emph{DF} systems, not only
for \emph{CI} as done in \cite{Yu}; an analytical proof for the
continuity of the bijection $F$; a detailed study of the time delay
between the trajectories of $A^\circ$ and $A^*$; the effect of noise
on the results; a comparison with the complex systems (fractals),
where the parameter switching may lead also to some interesting
results \cite{Almeida}. \vspace{10mm}

\noindent\textbf{APPENDIX}
\appendix
\section{Cross Correlation}\label{C}

As known, the cross-correlation of two signals is a measure of the
similarity of two waveforms. The cross-correlation has ranges from
-1.0 to +1.0. The closer it is to $+1$ or $-1$, the more closely the
two compared variables are related. The correlation of two signals
(the attractors underlying trajectories in our case) may indicate
that one of them is delayed in time with respect to the other. The
maximum value (close to unity) of this cross-correlation is obtained
when the two signals are in closest alignment with each other. The
value $-1$ means the signals are identically matched but opposite in
phase, while a value approaching zero indicates a low degree of
similarity (see the blue band around the horizontal axe in our
images). In this paper, the results were obtained with the
\emph{crosscorr} Matlab function with approximate $95$ percent
confidence interval.

\section{Hausdorff distance}\label{D}

The Hausdorff distance in a metric space, measures how far two
compact nonempty subsets are from each other. The classical
Hausdorff distance between two (finite) sets of points, \emph{A} and
\emph{B}, is defined as \cite[p.114]{Falconer}
\begin{equation*}
{D_H}(A,B) = \max \left\{ {\mathop {\sup }\limits_{x \in A} \mathop
{\inf }\limits_{y \in B} d(x,y),\mathop {\sup }\limits_{y \in B}
\mathop {\inf }\limits_{x \in A} d(x,y)} \right\},\
\end{equation*}
\noindent where $d(x,y)$ is the classical distance between two
points in the considered space.

\noindent If the two sets are curves, $D_H$ is defined as the
maximum distance to the closest point between the curves. Thus, if
the curves are defined as the sets of ordered pair of coordinates
$A=\{a_1,a_2,\ldots,a_n \}$ and $B=\{b_1,b_2,\ldots,b_m\}$ the
distance to the closest point between a point $a_i$ to the set
\emph{B} is \[d({a_i},B) = \mathop {\min }\limits_j \left\| {{b_j} -
{a_i}} \right\|.\] Thus, the Hausdorff distance is
\begin{equation*}
{d_H}(A,B) = \max \left\{ {\mathop {\max }\limits_i \left\{
{d({a_i},B)} \right\},\mathop {\max }\limits_j \left\{ {d({b_j},A)}
\right\}} \right\}.\
\end{equation*}

\section{Sketch of the analytical proof of Lemma \ref{prop}\label{A}}

Next, the main steps of the proof presented in \cite{Yu} for the
lemma, ensuring the \emph{AI} between $A^\circ$ and $A^*$ in the
case of \emph{CI} systems, are pointed out.

Consider the IVP (\ref{IVP}) with $C=O_{n\times n}$ and $q=1$
satisfying the assumptions stated in Section \ref{doi} and expressed
for the general case of $\mathbb{R}^n$, in the following form
\begin{equation} \label{eq1}
\dot{x}(t) = f(x(t)) + p\left(t/\lambda\right) B x(t), ~x(0) = x_0,
t \in I=[0,\infty),
\end{equation}
\noindent where $\lambda \in \mathbb{R}_{+}^{*}$ is a positive real
number which will be stated later, and $p:I\rightarrow \mathbb{R}^n
$ is considered as a piecewise continuous periodic function with
period $T_0$, and mean value $q$, i.e.
\begin{equation*}
\frac{1}{ T_0}\int_t^{t+ T_0} p(u)du = q, ~ t\in I. %\label{eq2}
\end{equation*}
\noindent Let us define the average model of (\ref{eq1}) obtained
with the $\emph{PS}$ algorithm, expressed as follows
\begin{equation}
\dot{y} = f(y) + q B y,  ~ y(0) = y_0. \label{eq3}
\end{equation}

\noindent The IVP (\ref{eq1}) models the \emph{PS} algorithm and
generates the synthesized attractor $A^\circ$, while the IVP
(\ref{eq3}) represents the system whose solution approximates the
average attractor $A^*$.

\noindent We have to prove that the solutions of the equations
(\ref{eq1}) and (\ref{eq3}) differ by less than $\lambda^{2}$ for
$\lambda$ sufficiently small, via the so called {\it order function}
defined in terms of approximations\footnote{The order function
$\delta(\lambda^2)$, introduced in \cite[p.11]{San}, implies that
there exists $k$ s.t. $|\delta(\lambda^2)| \leq k \lambda^2$ when
$\lambda$ is sufficiently small.}.

\noindent Let next suppose that (\ref{eq3}) satisfies the assumption
\textbf{H1} and admits $s:I\rightarrow \mathbb{R}$ as the unique
solution.

\noindent Linearizing (\ref{eq3}) on a neighborhood of $s$, one
obtains the following IVP
\begin{equation}
\dot{\varepsilon}(t) = [E(t) + q B ] \varepsilon(t), \qquad
\varepsilon(0)=\varepsilon_0, \label{eq4}
\end{equation}
where $\varepsilon(t)= y(t) - s(t)$ and $E(t)$ denotes the Jacobian
of $f$ evaluated at $s(t)$.

\noindent Because $s(t)$ is the solution in (\ref{eq3}),
$\varepsilon(t)=0$ for $t \in I$, should be a solution of
(\ref{eq4}).

\noindent If we linearize the IVP (\ref{eq1}) for $x \in \Gamma_s$
(the domain of attraction of  $\varepsilon=0$) one obtains
\begin{equation*}
\dot{e}(t) = [E(t) + p(t/\lambda) B ] e(t), \qquad e(0)=e_0,
%\label{eq5}
\end{equation*}
where $e(t) = x(t) - s(t)$.

\noindent Then, the theorem ensuring the \emph{AI} between the
attractors of the dynamical system modeled by the IVP (\ref{eq1})
and IVP (\ref{eq3}) can be enounced.

\mbox{}

\noindent\textbf{Theorem} Let assume that Eq.~(\ref{eq4}) is
uniformly exponentially stable, i.e. there exist the constants
$K>0,\mu>0$ such that
\begin{equation*}
\varepsilon(t) \leq K || \varepsilon_0 || \exp(-\mu t). %\label{eq70}
\end{equation*}
Then, for $e_0 = \varepsilon_0$, there exists a positive scalar
$\lambda
> 0$, such that $\lim_{t \rightarrow \infty}||e(t)-\varepsilon(t)||
= \delta(\lambda^2)$, where $\delta(\lambda^2)$ is an order
function.

\begin{proof}
%\begin{description}\item[]\end{description}
The complete proof can be found in \cite{Yu} and it mainly follows
the idea given in Chapter 4 of \cite{San}. The existence interval
$I$ is partitioned as follows: $I = \left[0, \lambda T\right]
\bigcup \left[\lambda T_0, 2\lambda T_0\right] \cdots$. In each
subinterval $I_{n} = \left[n\lambda T_0, (n+1)\lambda T_0 \right],
n=1,2,\ldots$, Eq.~(\ref{eq4}) has the solution $\varepsilon_n(t)$.
If on these subintervals, the initial condition is chosen
$\varepsilon_n(n\lambda T_0) = e(n\lambda T_0)$, using a generalized
Peano-Baker series \cite{Dacunha}, the Gronwall's inequality,
through straightforward algebraic operations, the following
inequality is inductive proved
\begin{equation*}
||e((n+1)\lambda T_0)-\varepsilon_n((n+1)\lambda T_0)|| \leq
\delta(\lambda^2),
\end{equation*}
for any $n$. Taking the limit $n \rightarrow \infty$, the proof is
complete.
\end{proof}

\section{Adams-Bashforth-Moulton scheme for fractional
ODEs} \label{B}

Next, a brief presentation of the Adams-Bashforth-Moulton scheme
\cite {Diethlem et al} is presented. Let consider the IVP
(\ref{IVPfrac}). Specifically, the method implies a discretization
of $I$ with grid points $t_{i}=hi, \ \ i=0,1,\ldots$ with a
preassigned step size \emph{h}. First, a preliminary approximation
$x_{i+1}^{\mathrm P}$ for $x(t_i)$ (the \emph{predictor phase}) is
computed via the formula
\begin{equation*}
  \label{eq:frac-AB}
  x_{i+1}^{\mathrm P} = \sum_{j=0}^{\lceil q \rceil-1}
      \frac{t_{i+1}^j}{j!} x_0^{(j)}
  + \frac{1}{\Gamma(q)} \sum_{j=0}^{i} b_{j,i+1} g(x_{j}),
\end{equation*}
\noindent where $b_{j,i+1}$ have the form
\begin{equation*}\label{eq:weights-uniform-rectangle}
  b_{j,i+1} = \frac{h^q}q \left( (i+1-j)^q - (i-j)^q \right).
\end{equation*}

\noindent Then, the final approximation $x_{i+1}$ for $x(t_{i+1})$
(the \emph{corrector phase}) is
\begin{align*}\label{eq:frac-AM}
x_{i+1} =& \sum_{j=0}^{\lceil q\rceil-1}\frac{t_{i+1}^j}{j!}
x_0^{(j)}+ \frac{h^q}{\Gamma(q+2)}
            \left( \sum_{j=0}^{i} a_{j,i+1} g(x_{j}) +
                    g(x_{i+1}^{\mathrm P}) \right),
\end{align*}
\noindent with
\begin{equation*}
\label{eq:weights-uniform-trap1} a_{0,i+1} =
   i^{q+1} - (i-q) (i+1)^q,
\end{equation*}
\noindent and
\begin{equation*}
\label{eq:weights-uniform-trap2} a_{j,i+1} =
  (i-j+2)^{q+1} + (i-j)^{q+1} - 2 (i-j+1)^{q+1},
\end{equation*}
\noindent for $j = 1, 2,\ldots, i$.

The Gamma function, $\Gamma$, is approximated in this work with the
so-called Lanczos approximation \cite{Press}
\begin{equation*}
\Gamma(z)=\frac{\sum_{i=0}^{6} p_{i}z^{i}} {\prod_{i=0}^{6}(z+i)}
(z+5.5)^{z+0.5}{\mathrm e}^{-(z+5.5)}.
\end{equation*}

\noindent for $z \in \mathbb C$ with $\mathop{\mathrm{Re}}(z)>0$.
The coefficients $p_i$ are shown in Table \ref{tab4}.

While in the standard methods for ODEs of integer order, the current
approximation $x_{k}$ depends only on the results of a few backward
steps, like all reasonable numerical methods for fractional
differential equations, the fractional scheme
Adams-Bashforth-Moulton requires the entire backward integration
history at each point in time. Thus, each calculated value $x_{k}$
depends on all previous values $x_0,x_1,\ldots,x_{k-1}$. This
characteristic implies a serious drawback with respect to the
required computing time. For example, to obtain $4000$ points within
some attractor, about $8\times 10^6$ iterations are necessary.
However, this is necessary to appropriately reflect the memory
effects possessed by fractional differential operators.

\noindent A detailed analysis of this method can be found in
\cite{Diethelem2} and a background on fractional differential
equations is presented in \cite{kaicarte}.

\section{Filippov regularization}\label{filippov}
Let consider the following general IVP with discontinuous right-hand
side
\begin{equation}
\dot x = f(x),\,\,\,\,x(0) = {x_0},\,\,\,t \in I, \label{disc}
\end{equation}
\noindent with $f$ locally bounded on $\mathbb{R}^n$. In particular,
the discontinuity is due to the discontinuity of the state variable,
of the associated vector field, of Jacobian (partial derivatives) or
higher order discontinuity. \noindent The continuity domain consists
in a finite $m$ number of open regions ${D_i} \subset
{\mathbb{R}^n},\,\,\,i = 1,2, \ldots ,m$, the discontinuity set $M$
being $M$ = ${\mathbb{R}^n}\backslash \bigcup\nolimits_{i = 1}^m
{{D_i}}$.

\noindent The IVP (\ref{disc}) may have not any solutions in the
classical sense. Therefore, in this paper we have chosen the way
given by \cite{Filippov}, which transforms the IVP (\ref{disc}), via
the differential inclusion approach, into a multi-valued Cauchy IVP
\begin{equation}\label{di}
\dot x \in F(x),\,\,\,\,x(0) = {x_0},\,\,\,\,for~ almost ~all~ t \in
I,
\end{equation}
\noindent where, $F:\mathbb{R}^n\rightrightarrows \mathbb{R}^n$ is a
set-valued function into the set of all subsets of $\mathbb{R}^n$.
For our class of systems, \emph{F} can be defined using the so
called \emph{Filippov regularization}
\begin{equation*}
F(x) = con\mathop {\lim }\limits_{x' \to x} f(x'),
\end{equation*}
\noindent where $con$ means the convex hull and $\lim _{x' \to
x}f(x')$ represents the set of all limits for all convergent
sequences $f(x_k)$ with $x_k\rightarrow x$. For $x\in M$, $F(x)$ is
a set, while for $x\notin M$, $F(x)$ consists in a single point
$f(x)$. As example, for the sign function the Filippov
regularization gives the following set-valued function
\begin{equation*}
\ Sgn(x) = \left\{ {\begin{array}{*{20}{c}}
   {\{  - 1\} ,\,\,x < 0,}  \\
   {[-1,1]\,\,\,\,x = 0,}  \\
   {\{  + 1\} \,\,\,\,\,\,x > 0.}  \\
\end{array}} \right.\
\end{equation*}
For example, the Filippov regularization applied to the Example
(\ref{exdisc}) leads to the following differential inclusion
\begin{equation} \label{exsetval}
\dot x \in 2 - 3{\mathop{\rm Sgn}} (x).
\end{equation}
\noindent\textbf{Definition} \cite{Filippov} A \emph{generalized
solution} (or \emph{Filippov solutions}) of the IVP (\ref{disc}) is
an absolutely continuous vector-valued function $x:I \to {R^n}$
verifying the IVP (\ref{di}) for a.a. $t \in I$.

\mbox{}

\noindent Even the IVP (\ref{disc}) may have no classical solutions,
the setvalued IVP (\ref{di}) may have a unique or several
generalized solutions. For example, the equation (\ref{exdisc}),
after regularization becomes the set-valued problem (\ref{exsetval})
and has the following generalized solutions: if $x_0>0$, then
$x(t)=-t+x_0$ for $t<x_0$ and $x(t)=0$ for $t\geq x_0$. In other
words, the solution can be prolonged continuous along the axis
$x=0$. If $x'_0<0$, the solution is $x(t)=5t+x'_0$ for $t<x'_0$ and
$x(t)=0$ for $t\geq x'_0$ (Fig.\ref{fig11})

The background of differential inclusions and their solutions can be
found e.g. in \cite{Aubin1} and \cite{Aubin2}. The existence and
uniqueness of solutions for our class of \emph{DI} systems are
presented in \cite{Dancaunguri} and will be not considered here.

To solve numerically the IVP (\ref{disc}) special numerical methods
for differential inclusions are necessary. However, for our class of
IVPs, due to the presence of $s$ functions, the discontinuity
appears only in a finite null set \emph{M}, where actually the IVP
is a set-valued problem. For the points $x \in {D_i}$, the IVP is a
continuous problem. Therefore, we can integrate in $D_i$ the IVP
(\ref{disc}) using e.g. the standard Runge-Kutta method, while for
$x \in M$, a numerical method for the corresponding differential
inclusion $\dot x \in F(x)$ has to be used. Precisely, when the
trajectory enters the discontinuity surface, we have to choose for
derivative of solution, generally for some finite time, a value
within the set $F(x)$ while a numerical method for differential
inclusions is used (the simplest one is the adapted forward Euler
method see e.g. \cite{Lempio2,Lempio}). For example, when $x=0$ in
the Example (\ref{exdisc}), we have to solve the differential
inclusion $\dot x \in [ - 1,5]$. There are several possibilities to
manage this problem using e.g. so called \emph{selection strategies}
(see \cite{Kastner}). In this paper we used the simplest way, namely
the \emph{random strategy} which implies a randomly choice of a
value within $F(x)$ (the interval $[- 1,5]$ in our example). There
are several possibilities to find the moments when the trajectory
enters and leaves the discontinuity surfaces (see e.g.
\cite{Wierci}) during which the chosen method solves the
differential inclusion.

%%%%%%%%%%%%%%%%%%%%%%%%%%%%%%%%%%%%%%%%%%%%%%%%%%%%%%%%%%%%%%%%%%%

\clearpage

\begin{table}[t]
\caption{Classification of the considered dynamical systems modeled
by the IVP (\ref{IVP}).} \label{tab1}
\begin{tabular}{c|p{75mm}p{75mm}}     % {p{20}|p{70mm}|p{70mm}}
\hline \noalign{\smallskip}
{} &{} $C = {0_{n\times n}}$ & $C\ne {0_{n\times n}}$\\
\noalign{\smallskip}\hline\noalign{\smallskip}
$q = 1$ & Continuous of Integer order (CI) systems &Discontinuous of Integer order (DI) systems\\
$q \in (0,1)$ & Continuous of Fractional order (CF) systems &
Discontinuous of Fractional order (DF) systems\\[6pt]
\hline
\end{tabular}
\end{table}

\begin{table}[t]\footnotesize
\caption{Systems utilized in this paper.} \label{tab2}
\begin{tabular}[c]{@{}cclcccc@{}}
\hline\noalign{\smallskip}
Type & Order & System  & $q$ & $f(x)$ & $B$ & $C$ \\
\noalign{\smallskip}\hline
\begin{tabular}[c]{c} \\
\rotatebox{90}{Continuous }
\end{tabular}
&
\begin{tabular}[c]{c}\\
\rotatebox{90}{Integer}\\
\end{tabular}
&
\begin{tabular}[c]{l}%
Lorenz \\
$\dot{x}_{1}=10(x_{2}-x_{1})$\\
$\dot{x}_{2}=-x_{1}x_{3}-x_{2}+px_{1}$  \\
$\dot{x}_{3}=x_{1}x_{2}-\frac{8}{3}x_{3}$
\end{tabular}
& $1$ & $\left(
\begin{array}[c]{c}%
{10(x}_{2}{-x}_{1}{)}\\
{-x}_{1}{x}_{3}{-x}_{2}\\
{x}_{1}{x}_{2}{-}\frac{8}{3}{x}_{3}%
\end{array}
\right)$ & $\left(
\begin{tabular}
[c]{lll}%
$0$ & $0$ & $0$\\
$1$ & $0$ & $0$\\
$0$ & $0$ & $0$%
\end{tabular}
\right)$ &
$O_{3\times3}$\\
\hline
\begin{tabular}[c]{c}%
\\
\rotatebox{90}{Continuous}\\
\\
\end{tabular}
&
\begin{tabular}
[c]{c}%
\\
\rotatebox{90}{Fractional}\\
\\
\end{tabular}
& %?\multicolumn{1}{l|}{$%
\begin{tabular}[c]{l}%
L\"{u} \\

$\frac{d^{q}x_{1}}{dt^{q}}=-x_{1}+px_{2}$\\
\noalign{\smallskip}
$\frac{d^{q}x_{2}}{dt^{q}}=-x_{1}x_{3}+28x_{2}$\\
\noalign{\smallskip}
$\frac{d^{q}x_{3}}{dt^{q}}=x_{1}x_{2}-3x_{3}$%
\end{tabular}
& $q<1$ & $\left(
\begin{array}
[c]{c}%
{\-x}_{1}   \\
{-x}_{1}{x}_{3}{+28x}_{2}\\
{x}_{1}{x}_{2}{-3x}_{3}%
\end{array}
\right)$ & $\left(
\begin{tabular}
[c]{lll}%
$0$ & $1$ & $0$\\
$0$ & $0$ & $0$\\
$0$ & $0$ & $0$%
\end{tabular}
\right)  $ & $O_{3\times3}$\\
\hline
\begin{tabular}[c]{c}%
\\
\rotatebox{90}{Discontinuous}\\
\end{tabular}
&
\begin{tabular}[c]{c}%
\\
\rotatebox{90}{\mbox{Integer}}\\
\end{tabular}
& %\multicolumn{1}{l|}{$%
\begin{tabular}[c]{l}%
Sprott \\
$\dot{x}_{1}=x_{2}$\\
$\dot{x}_{2}=x_{3}$\\
$\dot{x}_{3}=-x_{1}-x_{2}-px_{3}+sgn(x_{1})$%
\end{tabular}
& $1$ & $\left(
\begin{array}[c]{c}%
{x}_{2}\\
{x}_{3}\\
{-x}_{1}{-x}_{2}%
\end{array}
\right)$ & $\left(
\begin{tabular}[c]{lll}%
$0$ & $0$ & $0$\\
$0$ & $0$ & $0$\\
$0$ & $0$ & $-1$%
\end{tabular}
\right)$ & $\left(
\begin{tabular}[c]{lll}%
$0$ & $0$ & $0$\\
$0$ & $0$ & $0$\\
$1$ & $0$ & $0$%
\end{tabular}
\right)$\\
\hline
\begin{tabular}[c]{c}%
\\
\rotatebox{90}{Discontinuous} \\
\\
\end{tabular}
&
\begin{tabular}[c]{c}%
\\
\rotatebox{90}{Fractional} \\
\\
\end{tabular}
&
\begin{tabular}[c]{l}%
Chua\\
$\frac{d^{q}x_{1}}{dt^{q}}=-2.57x_{1}+9x_{2}+3.86sgn(x_{1})$\\
\noalign{\smallskip}
$\frac{d^{q}x_{2}}{dt^{q}}=x_{1}-x_{2}+x_{3}$\\
\noalign{\smallskip}
$\frac{d^{q}x_{3}}{dt^{q}}=-px_{2}$%
\end{tabular}
& $q<1$ & $\left(
\begin{array}[c]{c}%
{-2.57x}_{1}{+9x}_{2}\\
{-x}_{1}{-x}_{2}{+x}_{3}\\
{ 0}%
\end{array}
\right)$ & $\left(
\begin{tabular}[c]{lll}%
$0$ & $0$ & $0$\\
$0$ & $0$ & $0$\\
$0$ & $-1$ & $0$%
\end{tabular}
\right)$ & $\left(
\begin{tabular}[c]{lll}%
$3.86$ & $0$ & $0$\\
$0$ & $0$ & $0$\\
$0$ & $0$ & $0$
\end{tabular}
\right)$\\
\hline
\end{tabular}
\end{table}

\begin {table}[ht]
\caption{Pseudo-code of the $\emph{PS}$ algorithm} \label{tab3}
\begin{center}
\begin{tabular}[c]{l}%
\hline\noalign{\smallskip}
CHOOSE $\emph{$S_h$, T, h}$\\
REPEAT \\
\ \ \ \ \ \ \ FOR $i=1$ to $N$ \\
\ \ \ \ \ \ \ \ \ \ \ \ \ \ FOR $k=1$ to $m_{i}$ \\
\ \ \ \ \ \ \ \ \ \ \ \  \ \ \ \ \ \ \ \ \ one step
integration of the IVP (\ref{IVPsimplu}) for $p=p_i$\\
$\ \ \ \ \ \ \ \ \ \ \ \ \ \ \ \ \ \ \ \ \ t=t+h$\\
\ \ \ \ \ \ \ \ \ \ \ \ \ \ ENDFOR\\
\ \ \ \ \ \ \ ENDFOR\\
UNTIL $t\geq T$ \\
\noalign{\smallskip}\hline
\end{tabular}
\end{center}
\end{table}

\begin{table}[h]
\caption{Coefficients of the Lanczos approximation.} \label{tab4}
\begin{center}
\begin{tabular}{lr}
$i$ & \multicolumn{1}{c}{$p_i$} \\
\noalign{\smallskip}\hline\noalign{\smallskip}
$0$ & ${ 75122.6331530}$ \\
$1$ & ${ 80916.6278952}$ \\
$2$ & ${ 36308.2951477}$ \\
$3$ & ${ 8687.2452971} $ \\
$4$ & ${ 1168.9264948} $ \\
$5$ & ${ 83.8676043}   $ \\
$6$ & ${ 2.5066283}    $ \\
\end{tabular}
\end{center}
\end{table}

\clearpage

\begin{figure}[!h]
\includegraphics[scale=0.3]{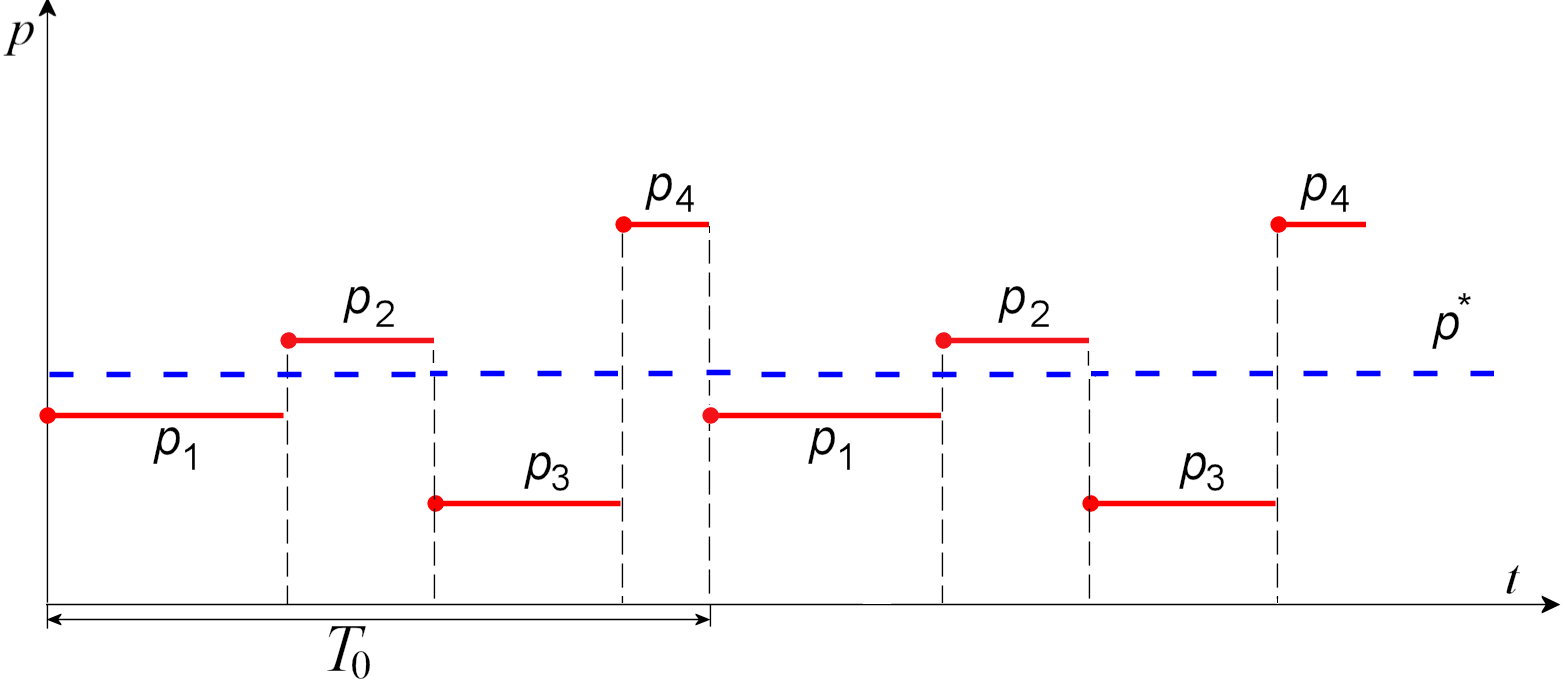}
\caption{Piecewise constant periodic function $p:I\rightarrow
\mathbb{R}$ (sketch).} \label{fig0}
\end{figure}

%\begin{SCfigure*}
\begin{figure*}[t]
\begin{center}
  \includegraphics[width=0.9\textwidth]{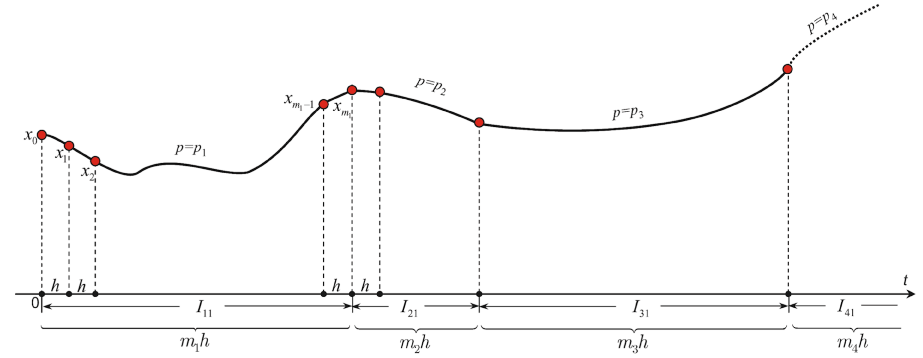}\\
\caption{Time subintervals $I_{i1}, i=1,2,3$ and $4$ (sketch)}
\label{fig1}
\end{center}
\end{figure*}
%\end{SCfigure*}

\begin{figure}[!b]
\begin{center}
\includegraphics[clip,width=0.5\textwidth]{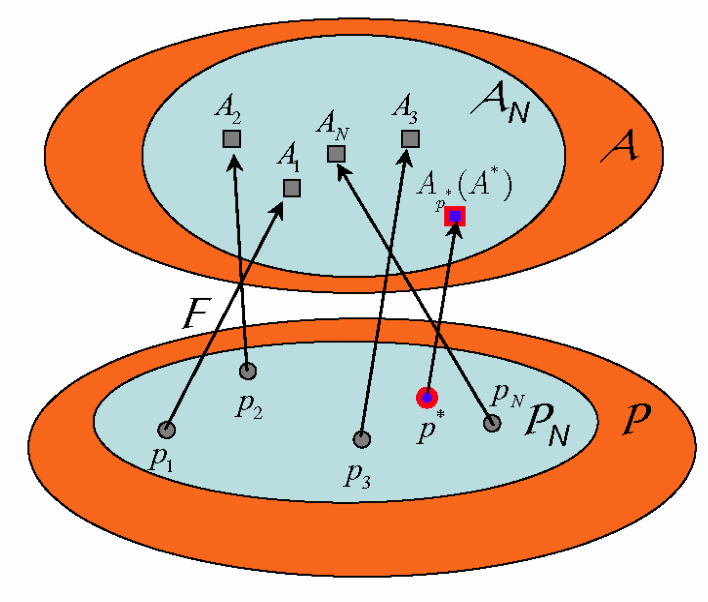}
\end{center}
\caption{Bijection $F:\mathcal{P}_{N}\rightarrow\mathcal{A}_{N}$
(sketch).} \label{fig2}
\end{figure}

\begin{figure}[t]
\begin{center}
\includegraphics[clip,width=0.6\textwidth]{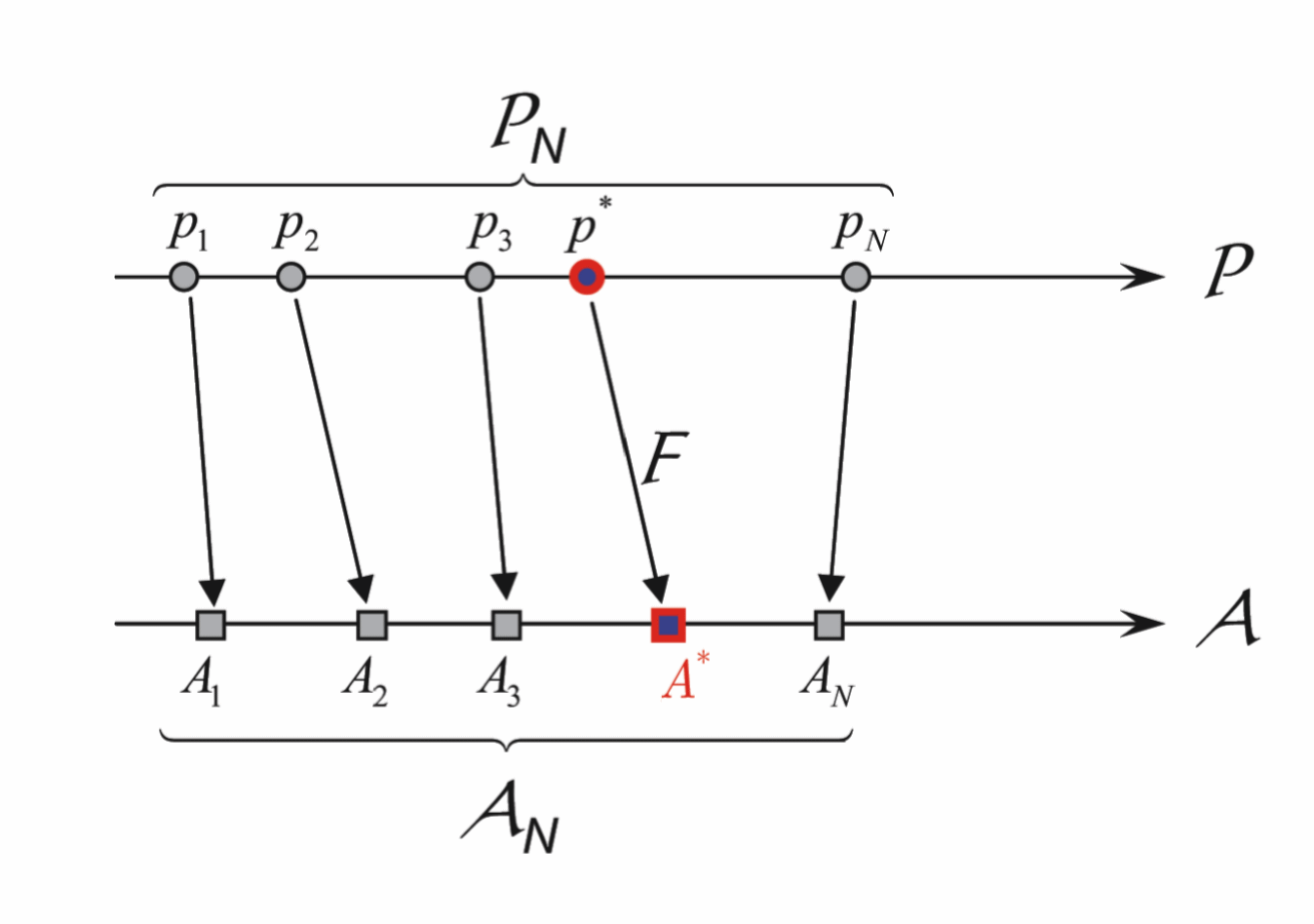}
\end{center}
\caption{Order induced by $F$ in $\mathcal{A}_{N}$.} \label{fig3}
\end{figure}

\begin{figure}[t]
\begin{center}
\includegraphics[clip,width=0.5\textwidth]{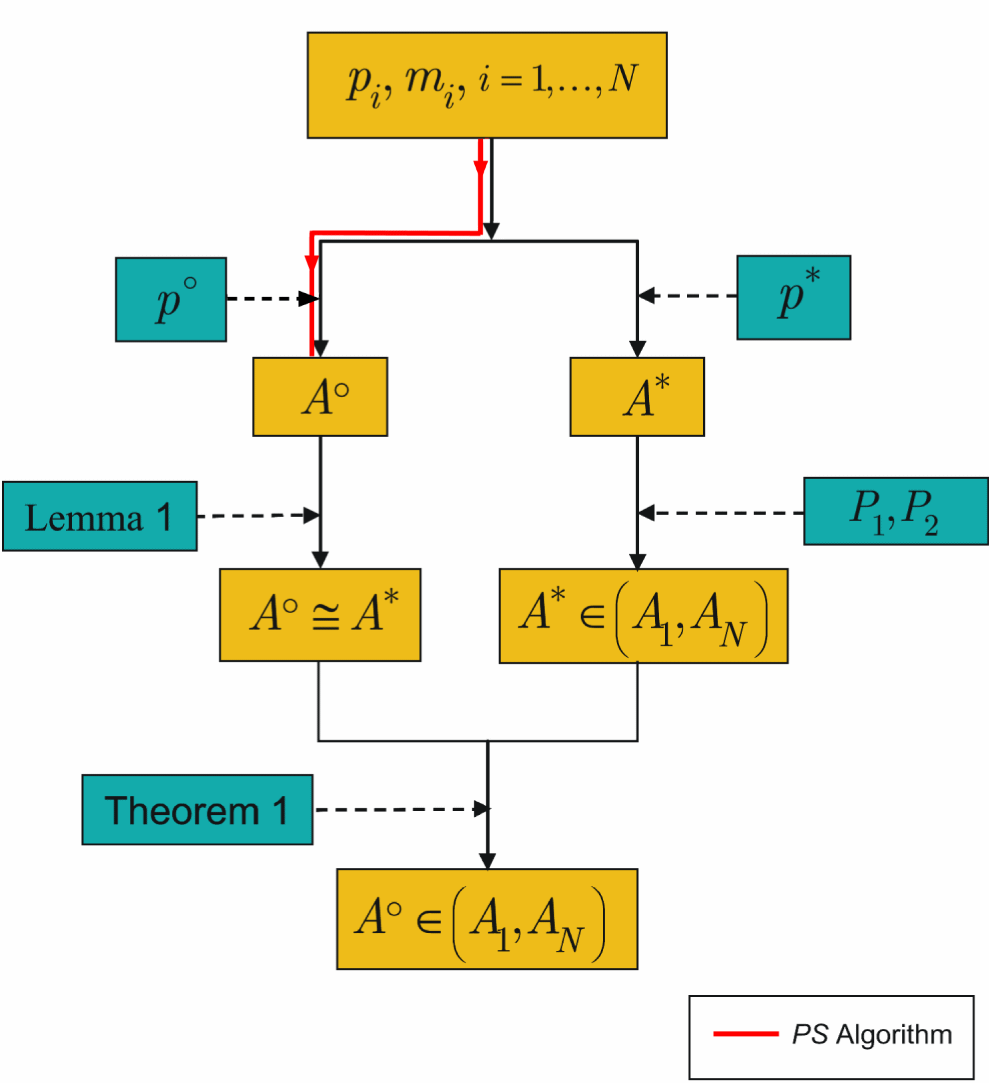}
\end{center}
\caption{Attractor synthesis: sketch of the proof of Theorem
\ref{prop}.}\label{fig00}
\end{figure}

%\begin{SCfigure*}[][t]
\begin{figure*}[ht]
%\sidecaption
\begin{center}
\includegraphics[clip,width=0.8\textwidth]{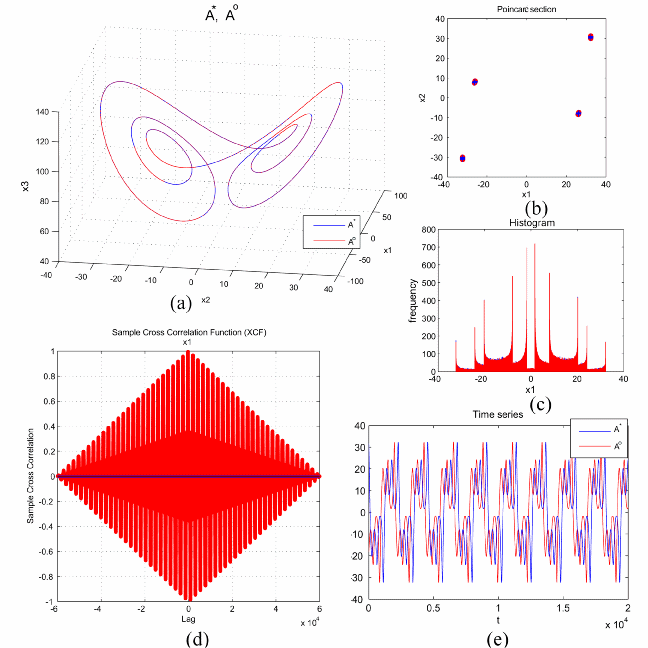}
\end{center}
\caption{Scheme $[{m_1}{p_1},{m_2}{p_2}]$ with $m_1=90$, $m_2=96$,
and $p_1=p_2=1$ applied to the Lorenz system. $p^*=93$. (a)
$A^\circ$ and $A^*$; (b) Poincar\'{e} sections; (c) Histograms; (d)
Cross-correlations; (e) Time series.\vspace{98mm}} \label{fig4}
\end{figure*}
%\end{SCfigure*}

%\begin{SCfigure*}[]
\begin{figure*}[!]
%\sidecaption
\begin{center}
\includegraphics[clip,width=0.8\textwidth]{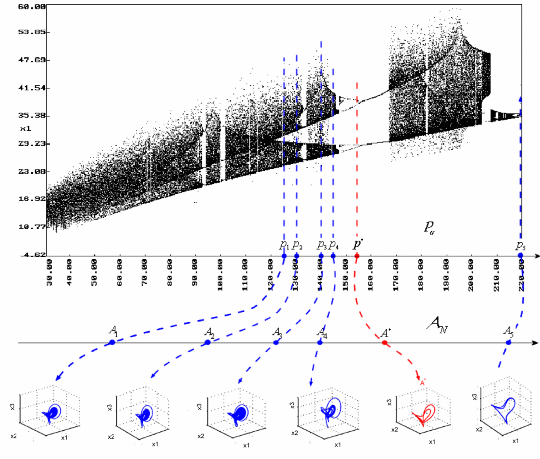}
\end{center}
\caption{Bifurcation diagram for the Lorenz system.\vspace{107mm}}
\label{fig5}
%\end{SCfigure*}
\end{figure*}

%\begin{SCfigure*}[]
\begin{figure*}[!t]
%\sidecaption
\begin{center}
\includegraphics[clip,width=0.8\textwidth]{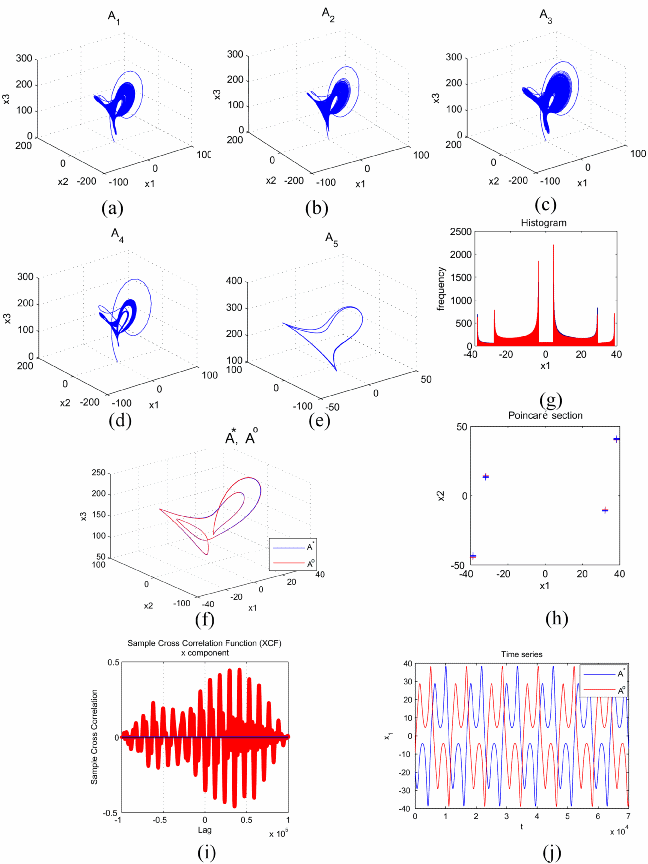}
\end{center}
\caption{Scheme $\left[{2{p_1},3{p_2},2{p_3},4{p_4},3{p_5}} \right]$
for $p_1=125$, $p_2=130$, $p_3=140$, $p_4=144$ and $p_5=220$ applied
to the Lorenz system. $p^*=154$. (a)-(e) The attractors $A_i$
corresponding to $p_i, i=1,\ldots,5$; (f) $A^\circ$ and $A^*$; (g)
Histograms; (h) Poincar\'{e} sections; (i) Cross-correlations; (j)
Time Series.\vspace{125mm}} \label{fig6}
%\end{SCfigure*}
\end{figure*}

\begin{figure}[t]
\begin{center}
\includegraphics[clip,width=0.8\textwidth]{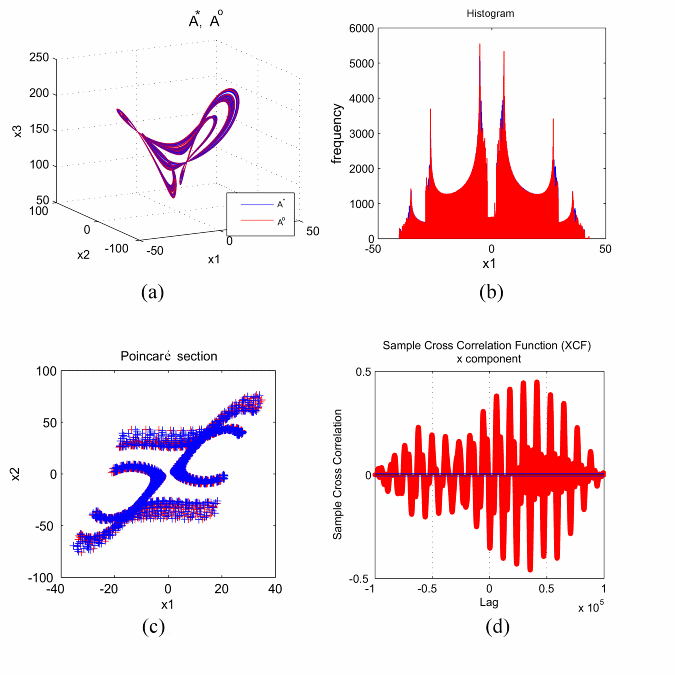}
\end{center}
\caption{Same scheme as in Fig.\ref{fig6}: $\left[
{2{p_1},3{p_2},2{p_3},4{p_4},3{p_5}} \right]$ but with $p_5=166$
instead $p_5=220$. $p^*=142.428$. (a) $A^\circ$ and $A^*$; (b)
Histograms; (c) Poincar\'{e} sections; (d) Cross-correlations.}
\label{fig7}
\end{figure}

\begin{figure}[t]
%\begin{center}
\includegraphics[clip,width=0.8\textwidth]{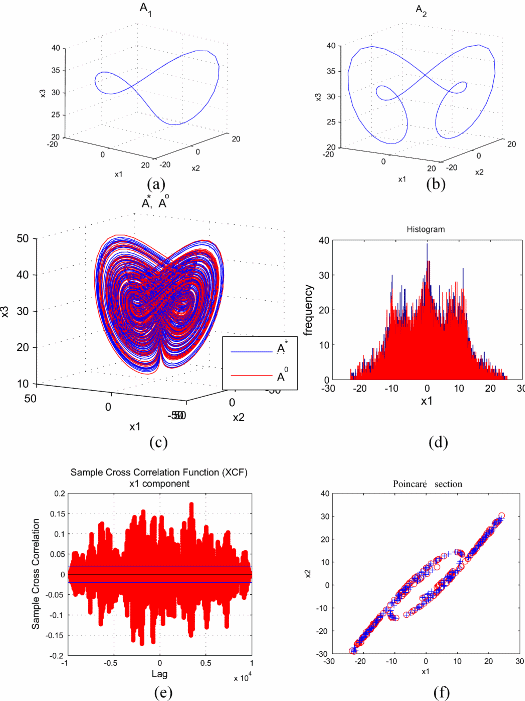}
%\psfig{file=figure10.png, width=0.94\linewidth}
%\end{center}
\caption{Scheme $[1p_1,1p_2]$ with $p_1=32$ and $p_2=34.5$ applied
to the fractional L\"{u} system. $p^*=33.25$. (a),(b) The
attractors, $A_1$ and $A_2$; (c) The attractors $A^\circ$ and $A^*$;
(d) Histograms; (e) Cross-correlations; (f) Poincar\'{e} sections.}
\label{fig8}
\end{figure}

%\begin{SCfigure*}
\begin{figure*}
\begin{center}
%\sidecaption
\includegraphics[clip,width=0.8\textwidth]{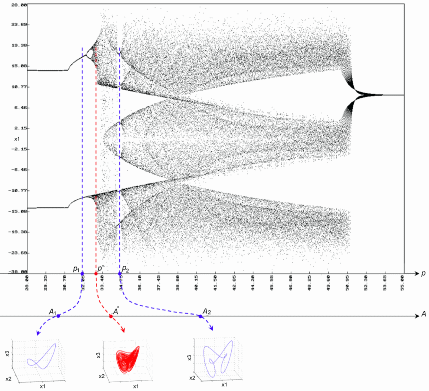}
\end{center}
\caption{Bifurcation diagram for the L\"{u} system.} \label{fig9}
\end{figure*}
%\end{SCfigure*}

\begin{figure}
\begin{center}
\includegraphics[clip,width=0.8\textwidth]{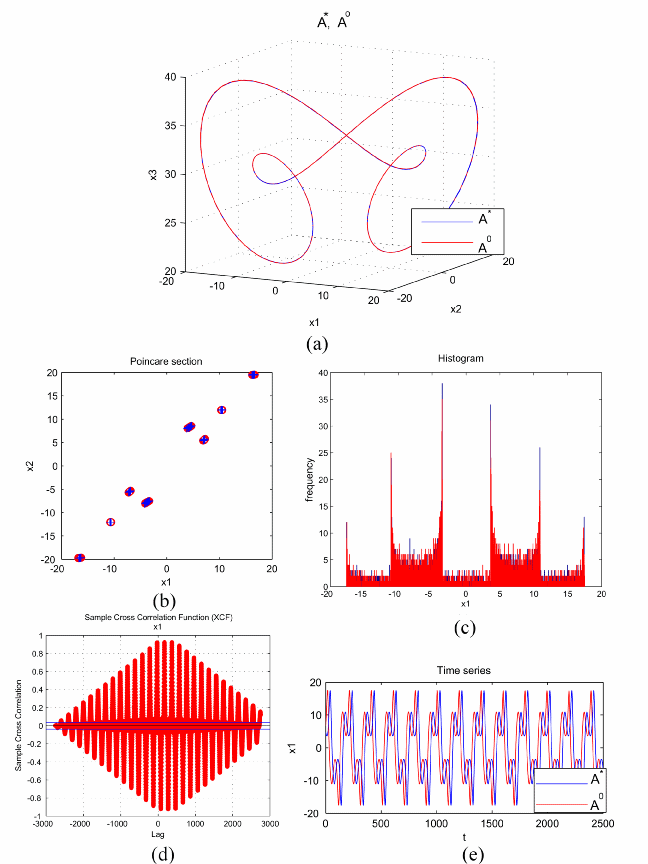}
\end{center}
\caption{Scheme $[1p_1,1p_2]$, with $p_1=33.5$ and $p_2=35.5$ for
the L\"{u} system. $p^*=34.5$. (a) $A^\circ$ and $A^*$; (b)
Poincar\'{e} sections; (c) Histograms; (d) Cross-correlations; (e)
Time series.} \label{fig10}
\end{figure}

\begin{figure}[t]
\begin{center}
\includegraphics[clip,width=0.6\linewidth]{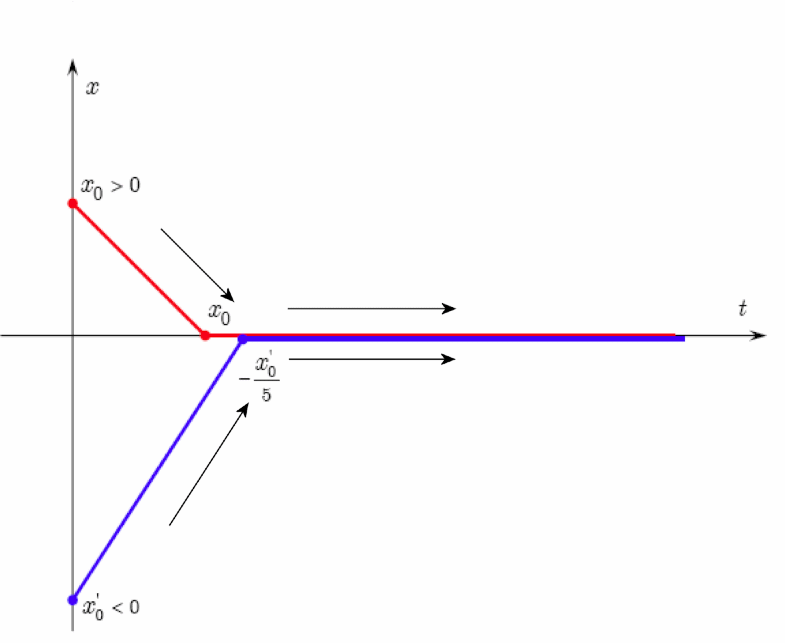}
\end{center}
\caption{Generalized solutions of the equation (\ref{exdisc})
(Sketch).} \label{fig11}
\end{figure}

%\begin{SCfigure*}[][t]
\begin{figure*}
\begin{center}
%\sidecaption
\includegraphics[clip,width=0.8\textwidth]{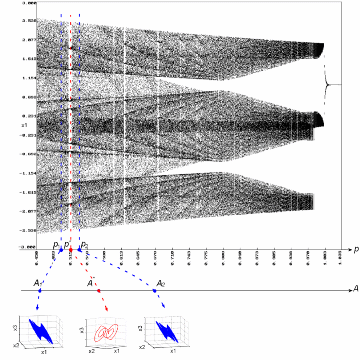}
\end{center}
\caption{Bifurcation diagram for the Sprott system.} \label{fig12}
\end{figure*}
%\end{SCfigure*}

%\begin{SCfigure*}
\begin{figure*}
%\sidecaption
\begin{center}
\includegraphics[clip,width=0.8\textwidth]{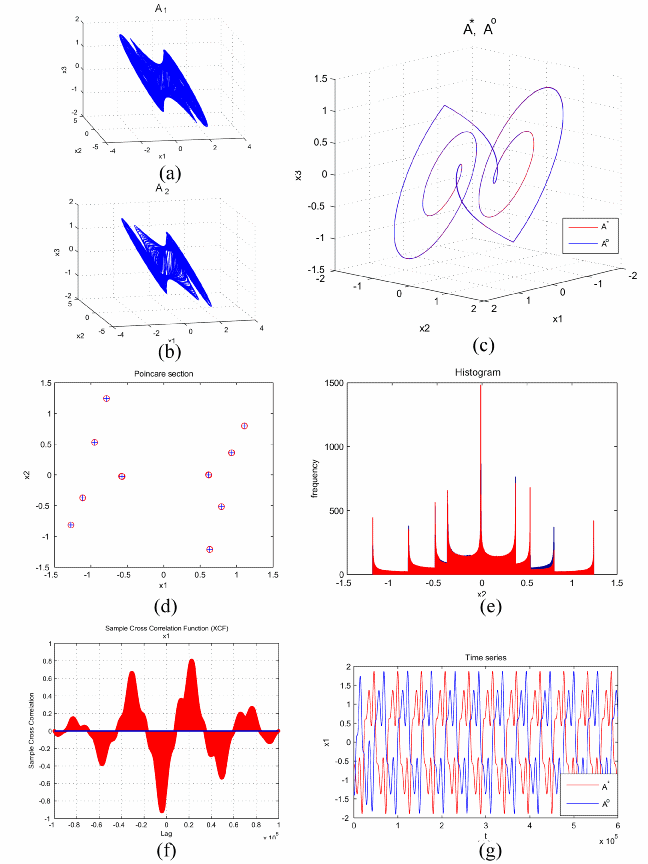}
\end{center}
\caption{Scheme $[1p_1,1p_2]$ for $p_1=0.5$ and $p_2=0.528$ applied
to the Sprott system. $p^*=0.514$. (a),(b) $A_1$ and $A_2$; (c)
$A^\circ$ and $A^*$; (d) Poincar\'{e} sections; (e) Histograms; (f)
Cross-correlations; (g) Time series.}  \label{fig13}
\end{figure*}
%\end{SCfigure*}

%\begin{SCfigure*}
\begin{figure*}[!t]
\begin{center}
%\sidecaption
\includegraphics[clip,width=0.8\textwidth]{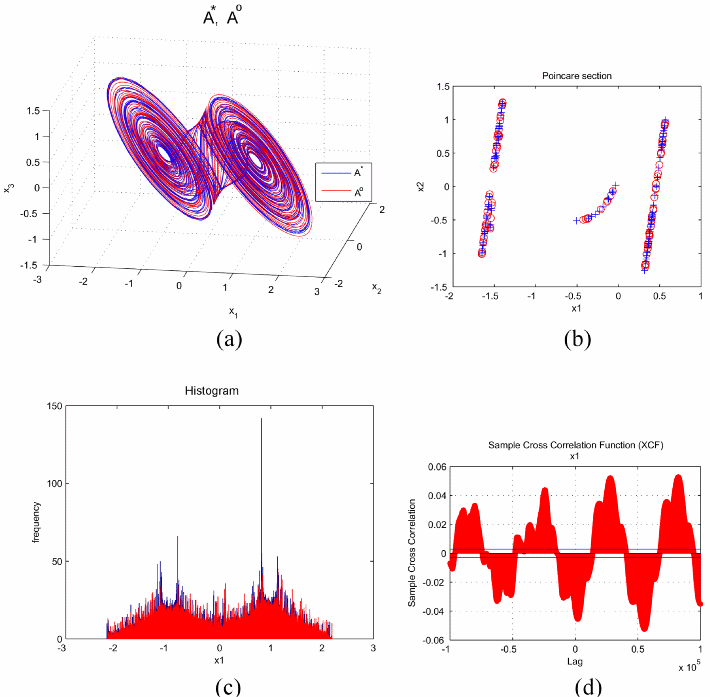}
\end{center}
\caption{Chaotic \emph{PS} algorithm applied to Sprott system for
$N=100$ and random with uniform distribution choice for $m_i$: $m_i
\in {{1,2,3}}$ and $p_i \in [0.45,0.65]$. (a) $A^\circ$ and $A^*$;
(b) Poincar\'{e} sections; (c) Histograms; (d) Cross-correlation.}
\label{fig14}
\end{figure*}
%\end{SCfigure*}

%\begin{SCfigure*}
\begin{figure*}[!t]
\begin{center}
%\sidecaption
\includegraphics[clip,width=0.8\textwidth]{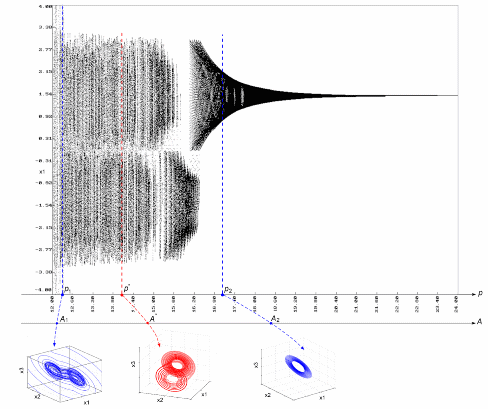}
\end{center}
\caption{Bifurcation diagram for the fractional Chua system.}
\label{fig15}
\end{figure*}
%\end{SCfigure*}

%\begin{SCfigure*}
\begin{figure*}[!t]
\begin{center}
%\sidecaption
\includegraphics[clip,width=0.8\textwidth]{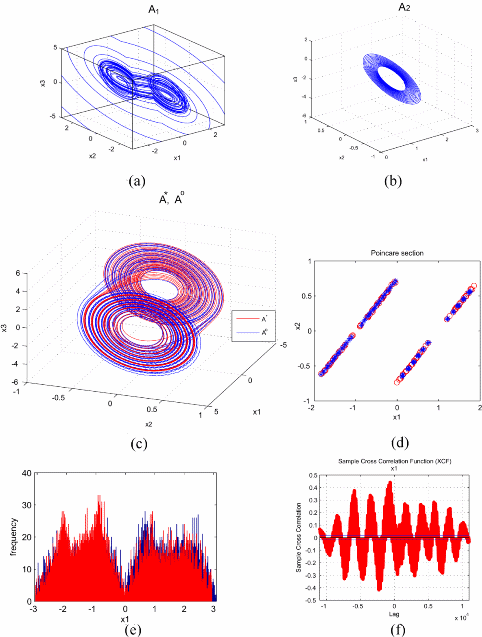}
\end{center}
\caption{Scheme $[2p_1,p_2]$ with $p_1=12.5$ and $p_2=17$, applied
to the fractional Chua system. $p^*=14$. (a),(b) $A_1$ and $A_2$;
(c) $A^\circ$ and $A^*$; (d) Poincar\'{e} sections; (e) Histograms;
(f) Cross-correlations.} \label{fig16}
\end{figure*}

%\end{SCfigure*}


\begin{thebibliography}{50}

\bibitem {Danca1} Danca, M.-F., Tang W.K.S., Chen, G.: A switching scheme for synthesizing attractors of dissipative chaotic systems. Applied Mathematics and Computation \textbf{201}, 1--2 (2008)

\bibitem  {Danca2} Danca, M.-F.: Random parameter-switching synthesis of a class of hyperbolic attractors. Chaos \textbf{18}, 033111 (2008)

\bibitem{Danca5} Danca, M.-F.:  Synthesizing the L\"{u} attractor by parameter-switching. Int. J. Bifurcation and Chaos \textbf{21}(1), (2011)

\bibitem{Danca3} Danca, M.-F.: Finding stable attractors of a class of dissipative dynamical systems by numerical parameter switching. Dynamical Systems \textbf{25}(2), 189--201 (2010)

\bibitem{DancaW}Danca, M.-F., Wang, Q.: Synthesizing attractors of Hindmarsh-Rose neuronal systems. Nonlinear Dynamics \textbf{62}(1), 437--446 (2010)

\bibitem{DancaMorel} Danca, M.-F., Morel, C.: Attractors synthesis of a class of networks. Dynamics of Continuous, Discrete \& Impulsive Systems B, accepted (2010)

\bibitem{Danca4} Danca, M.-F.: Attractors synthesis for a Lotka-Volterra-like system. Applied Mathematics and Computation \textbf{216}, 2107--2117 (2010)

\bibitem{Almeida} Almeida, J., Peralta-Salas, D., Romera, M.: Can two chaotic systems give rise to order?. Physica D \textbf{200}, 124--132 (2005)

\bibitem{Romera} Romera, M., Small, M., Danca, M.-F.: Deterministic and random synthesis of discrete chaos. Applied Mathematics and Computation \textbf{192}(1), 283--297 (2007)

\bibitem{Danca7} Danca, M.-F., Romera, M., Pastor, G.: Alternated Julia sets and connectivity properties. Int. J. Bifurcat. Chaos \textbf{19}, 2123--2129 (2009)

\bibitem{Yu}Mao, Y., Tang, W.K.S., Danca, M.-F.: An Averaging Model for the Chaotic System with Periodic Time-Varying
    Parameter. Applied Mathematics and Computation \textbf{217}(1), 355–362 (2010)

\bibitem{Lempio2} Lempio, F.: Difference methods for differential inclusions. Lect Notes Econ. Math. Syst. \textbf{378}, 236--273 (1992)

\bibitem{Diethelem2} Diethelm, K., Ford, N.J., Freed, A.D.: Detailed error analysis for a fractional Adams method. Numer. Algorithms \textbf{36}, 31--52 (2004)

\bibitem {kaicarte}Diethelm, K.:The Analysis of fractional differential equations An application-oriented
exposition using differential operators of Caputo type. Springer,
Berlin Heidelberg (2010)

\bibitem{Sprott} Sprott, J.C.: A new class of chaotic circuit. Phys. Lett. A \textbf{266}, 19--23 (2000)

\bibitem{Lu2}L\"{u}, J.G.:  Chaotic dynamics of the fractional-order Lü system and its synchronization. Physics Letters A \textbf{354}, 305--311 (2006)

\bibitem{Brown} Brown, R.: Generalizations of the Chua equations.   IEEE Trans. Circuits Syst.-I: Fund. Th. Appl. \textbf{40}(11), 878--883 (1993)

\bibitem{kapitan} Kapitanski, L., Rodnianski, I.:  Shape and Morse theory of attractors. Communications on Pure and Applied Mathematics \textbf{53}, 218--242 (2000)

\bibitem{Stuart} Stuart, A., Humphries, A. R.:  Dynamical systems and numerical analysis. Cambridge Monographs on Applied and Computational Mathematics (No. 2). Cambridge University Press, Cambridge (1998)

\bibitem{Milnor} Milnor, J.: On the concept of attractor.  Communications in Mathematical Physics \textbf{99}, 177--195 (1985)

\bibitem{Temam} Temam, R.:  Infinite dimensional dynamical systems in mechanics and physics. Applied Mathematical Sciences \textbf{68}, Springer-Verlag, New York  (1988)

\bibitem{Hirsch} Hirsch, W.M., Pugh, C.:  Stable manifolds and hyperbolic sets. in: Proc. Symp. Pure Math., \textbf{14}, American Mathematical Society, 133--164 (1970)

\bibitem{Hirsch2} Hirsch, W.M., Smale, S., Devaney, L. R.:  Differential Equations, Dynamical Systems and An Introduction to Chaos, second ed. Elsevier Academic Press, London (2004)

\bibitem{Foias} Foias, C., Jolly, M. S.: On the numerical algebraic approximation of global attractors. Nonlinearity \textbf{8}, 295--319 (1995)

\bibitem {Lempio} Lempio, F.: Euler's method revisited. Proc. Steklov Institute of Mathematics Moscow \textbf{211}, 473--494 (1995)

\bibitem {Polod} Podlubny, I.:  Fractional Differential Equations. Academic Press, New York  (1999)

\bibitem{Hilfer} Hilfer, R. (Ed.): Applications of Fractional Calculus in Physics. World Scientific, New Jersey (2000)

\bibitem{Ahmed} Ahmed, E., Elgazzar, A.S.: On fractional order differential equations model for nonlocal epidemics. Physica A \textbf{379}, 607--614 (2007)

\bibitem{el} El-Sayed, A.M.A. El-Mesiry, A.E.M., El-Saka, H.A.A.: On the fractional-order logistic equation. Applied Mathematics Letters \textbf{20}, 817--823 (2007)

\bibitem{Podlubny} Podlubny, I., Petr\'{a}s, I., Vinagre, B.M., O'Leary, P., Dorc\'{a}k, L.: Analogue realization of fractional-order controllers. Nonlinear Dynamics \textbf{29}, 281--296 (2002)

\bibitem{Diethlem et al} Diethelm, K., Ford, N. J., Freed, A.D.: A predictor-corrector approach for the numerical solution of fractional differential equations. Nonlinear Dynamics \textbf{29}, 3--22 (2002)

\bibitem{Lu} L\"{u}, J., Chen, G., Cheng, D., Celikovsky, S.:  Bridge the gap between the Lorenz system and the Chen system. Int. J. Bifurcation Chaos \textbf{12}, 2917--2926 (2002)

\bibitem{Danca6} Danca, M.-F.: Chaotic behavior of a class of discontinuous dynamical systems of fractional-order. Nonlinear Dynamics \textbf{60}(4), 525--534 (2010)

\bibitem {cu Dieth} Danca, M.-F., Diethlem K.: Fractional-order attractors synthesis via parameter switchings. Communications in Nonlinear Science and Numerical Simulations \textbf{15}(12), 3745--3753 (2010)

\bibitem{Popp}Popp, K., Stelter, P.: Stick-slip vibrations and chaos. Philos. Trans. R. Soc. London A \textbf{332}, 89--105 (1990)

\bibitem{Deimling} Deimling, K.: Multivalued differential equations and dry friction problems. Proc. Conf. Differential \& Delay Equations, Ames, Iowa, eds. Fink, A. M., Miller, R. K. \& Kliemann, W. World Scientific, 99--106 (1992)

\bibitem{Wierci}Wiercigroch, M., de Kraker, B.: Applied Nonlinear Dynamics and Chaos of Mechanical Systems with Discontinuities. World Scientific, Singapoore (2000)

\bibitem{Danca2007} Danca, M.-F.: On a class of non-smooth dynamical systems: a sufficient condition for smooth vs nonsmooth solutions. Regular \& Chaotic Dynamics \textbf{12}(1), 1--11 (2007)

\bibitem {Dancax} Danca, M.-F.: Numerical Approximation of a Class of Discontinuous Systems of Fractional Order. Nonlinear Dynamics, accepted, 2010, DOI: 10.1007/s11071-010-9915-z

\bibitem{Aubin1} Aubin, J.-P., Cellina, A.: Differential inclusions set-valued maps and viability theory. Springer Verlag, Berlin (1984)

\bibitem{DancaCodr} Danca, M.-F., Codreanu, S.: On a possible approximation of discontinuous dynamical systems. Chaos, Solitons \& Fractals \textbf{13}(4), 681--691 (2002)

\bibitem{Li} Li, S., Chen, G., Mou, X.: On the
dynamical degradation of digital piecewise linear chaotic maps.
International journal of Bifurcation and Chaos \textbf{15},
3119--3151 (2005)


\bibitem{Falconer} Falconer, K.: Fractal Geometry: Mathematical Foundations and Applications. John Wiley \& Sons, Chichester (1990)

\bibitem {San} Sanders, J.A., Verhulst, F.: Averaging Methods in Nonlinear Dynamical Systems. Springer-Verlag, New York (1985)

\bibitem{Dacunha} Dacunha, J.J.: Transition matrix and generalized matrix exponential via the Peano-Baker Series. Journal of Difference Equations and Applications \textbf{11}(15), 1245--1264 (2005)

\bibitem{Press} Press, W. H., Teukolsky, S.A., Vetterling, W. T., Flannery, B.P.: Numerical Recipes in C: The Art of Scientific Computing, 2nd ed. Cambridge University Press, Cambridge (1992)

\bibitem{Filippov} Filippov, A.F.: Differential equations with discontinuous righthand sides. Kluwer Academic Publishers, Dordrecht (1988)

\bibitem{Aubin2} Aubin, J.-P., Frankowska, H.: Set-valued analysis. Birkh\"{a}user, Boston (1990)

\bibitem{Dancaunguri}Danca, M.-F.: On a class of discontinuous dynamical system. Miskolc Mathematical Notes, \textbf{2}(2), 103--116 (2001)

\bibitem{Kastner} Kastner-Maresch, A., Lempio, F.: Difference methods with selection strategies for differential inclusions. Numer. Funct. Anal. Optim. \textbf{14}(5--6), 555--572 (1993)

\end{thebibliography}
\end{document}